\documentclass[10pt,a4paper,twocolumn]{article}

\usepackage[english]{babel}
\usepackage[T1]{fontenc}
\usepackage[utf8]{inputenc}
\usepackage{microtype}
\usepackage[a4paper,margin=2.54cm]{geometry}

\usepackage{amsmath,amssymb,amsthm,mathtools,bm}

\usepackage{booktabs,array,tabularx}
\usepackage{ragged2e}
\usepackage{enumitem}
\usepackage{caption}
\newcolumntype{Y}{>{\RaggedRight\arraybackslash}X}
\newcolumntype{L}[1]{>{\RaggedRight\arraybackslash}p{#1}}

\usepackage{graphicx}
\usepackage{xcolor}
\usepackage{placeins}

\usepackage{xurl}
\usepackage{csquotes}
\usepackage{hyperref}

\usepackage[backend=biber,style=ieee,citestyle=numeric-comp,sorting=none,maxbibnames=99]{biblatex}
\hypersetup{
  hidelinks,
  pdftitle={Rand-SEMI-QAOA: Finite-Budget Depth-One MaxCut Ensembles on Compressed Quantum Registers},
  pdfauthor={Emilio Semre and Steven Frankel},
  pdfsubject={Rand-SEMI-QAOA ensembles for depth-one MaxCut on compressed quantum registers},
  pdfkeywords={
    Rand-SEMI-QAOA,
    SEMI-QAOA,
    QRAO,
    QAOA,
    mutually unbiased bases,
    stabilizer states,
    MaxCut
  }
}

\theoremstyle{plain}
\newtheorem{theorem}{Theorem}[section]
\newtheorem{lemma}[theorem]{Lemma}
\newtheorem{proposition}[theorem]{Proposition}

\theoremstyle{definition}
\newtheorem{definition}[theorem]{Definition}

\theoremstyle{remark}

\newcommand{\R}{\mathbb R}
\newcommand{\F}{\mathbb F}

\newcommand{\E}{\mathbb E}
\newcommand{\Prob}{\mathbb P}

\newcommand{\ii}{\mathrm{i}}

\newcommand{\ket}[1]{\lvert #1\rangle}
\newcommand{\bra}[1]{\langle #1\rvert}

\newcommand{\MUB}{\mathrm{MUB}}

\newcommand{\SEMIQAOA}{\mbox{SEMI-QAOA}}
\newcommand{\RandSEMIQAOA}{\mbox{Rand-SEMI-QAOA}}
\newcommand{\Family}{\mathcal F}
\newcommand{\calH}{\mathcal H}

\DeclareMathOperator{\Tr}{tr}

\DeclareMathOperator{\supp}{supp}
\DeclareMathOperator{\wt}{wt}
\DeclareMathOperator{\rank}{rank}

\title{Rand-SEMI-QAOA: Finite-Budget Depth-One MaxCut Ensembles on Compressed Quantum Registers}
\author{%
Emilio Semre\\\small Department of Computer Science, Technion--Israel Institute of Technology, Haifa, Israel\\\small \href{mailto:abed.semre@campus.technion.ac.il}{abed.semre@campus.technion.ac.il}
\and
Steven Frankel\\\small Faculty of Mechanical Engineering, Technion--Israel Institute of Technology, Haifa, Israel}
\date{}
\begin{document}
\maketitle
\begin{abstract}
We introduce Rand-SEMI-QAOA, a finite-budget QRAO--QAOA ensemble for
MaxCut based on a $(3,1)$-QRAC relaxation.  The method samples labels
uniformly without replacement from the product-$X$ family of an implemented
Galois stabilizer mutually unbiased basis (MUB) system.  Each selected
state--mixer pair is optimized independently for relaxed QRAO energy and
then evaluated by deterministic Pauli-sign decoding.  Exhaustive scans of the
implemented noncomputational MUB catalog place the product-$X$ family first in
18 of 20 validated family-mean cells and in every tested cell for
$r=5,6,7$.  On a complete cohort of $2{,}400$ random connected 3-regular
MaxCut instances with $n\in\{18,20,22\}$, the capped matched-cardinality
schedule attains a graph-mean decoded best-of-set approximation ratio of
$0.9421$ with one QAOA layer, while the $K=r^2$ schedule attains $0.9319$.
These statistics are conditional on the disclosed frozen selector pools and
do not estimate variability over selector seeds.  An exact gauge identity
shows that product-family labels generate the orbit of the QRAO Hamiltonian
under an $r$-dimensional sign-gauge group.  For common angles and relaxed
energy, deterministic syndrome analysis proves branchwise dephasing, while an
anisotropic Gaussian surrogate controls the coherent label-averaged response
on the scale $\beta=b/r$.  The theory does not order independently optimized
decoded maxima.  The results are finite-size and resource-explicit and do not
establish quantum advantage.
\end{abstract}

\noindent\textbf{Keywords:} Rand-SEMI-QAOA, SEMI-QAOA, quantum random access optimization, QAOA, mutually unbiased bases, stabilizer states, MaxCut, polymer expansion

\section{Introduction}
\label{sec:introduction}

Combinatorial optimization problems (COPs) seek the best solution from a discrete set of possibilities and arise throughout science, engineering, and computer science.  The Quantum Approximate Optimization Algorithm (QAOA) is a leading variational quantum approach to such problems, but at shallow depth its expressivity and achievable solution quality can be limited.  MaxCut, which asks for a partition of a graph that maximizes the number of edges crossing between the two parts, is a standard benchmark for studying these limitations.  For standard QAOA on 3-regular MaxCut, one shared cost angle and one shared mixer angle have a worst-case $p=1$ guarantee of $0.6924$ \cite{FarhiGoldstoneGutmann2014,WurtzLove2021} and give mean expected-energy approximation ratios $0.7617$ and $0.7562$ on the $n=50$ and $100$ triangle-free random 3-regular cohorts studied by Herrman et al.\ \cite{Herrman2022MultiAngle}.  Fixed-parameter concentration, transferable schedules, and optimal-parameter concentration provide additional structure at low depth \cite{BrandaoEtAl2018,WurtzLykov2021,AkshayEtAl2021}.  More broadly, however, depth alone does not specify the resource cost or output statistic of a quantum optimization pipeline: shallow performance can also be supplemented by additional parameters, classical preprocessing, postprocessing, recursion, or an ensemble of quantum ansatzes.

Multi-angle QAOA assigns a separate parameter to each edge and vertex.  At
$p=1$ it reports a mean expected approximation ratio of $0.9257$ over all
connected nonisomorphic eight-vertex graphs, while reporting $0.8123$ and
$0.8000$ on triangle-free random 3-regular graphs at $n=50$ and $100$
\cite{Herrman2022MultiAngle}.  Regularized warm-started QAOA (RWS-QAOA)
constructs an instance-dependent tilted product state by nonconvex classical
optimization; its hardware study uses 100 warm-start initializations and fixed
QAOA parameter schedules, and reports measured-bitstring ratios near $0.97$ at
$p=1$ on five $n=96$ random 3-regular instances, with a single-step local-search
variant reported separately \cite{He2026RWSQAOA}.  Other shallow pipelines
globally round QAOA correlations \cite{DupontSundar2024QRR,Dupont2025QRRStar},
construct and sample a classical surrogate distribution
\cite{WyboFinzgar2026QISS}, or recursively eliminate variables using repeated
quantum calls \cite{Bravyi2020RQAOA,Finzgar2024QIRO,KondoEtAl2025}.  These
methods are relevant comparators, but they allocate different resources and
report noninterchangeable statistics.

\begin{table*}[t]
\centering
\caption{Resource-model context for shallow quantum-assisted MaxCut.  The rows
are not a numerical leaderboard: expected energies, measured bitstrings,
globally rounded correlations, surrogate samples, recursive outputs, and
decoded best-of-set ratios are different objects on different graph ensembles.}
\label{tab:related-resources}
\footnotesize
\setlength{\tabcolsep}{2.5pt}
\par\noindent\makebox[\linewidth][c]{%
\resizebox{0.96\linewidth}{!}{%
\begin{tabular}{L{0.13\textwidth}L{0.18\textwidth}L{0.34\textwidth}L{0.25\textwidth}}
\toprule
Method & Shallow quantum object & Instance-dependent resources beyond $p$ & Reported output \\
\midrule
Standard QAOA \cite{FarhiGoldstoneGutmann2014} & shared $\gamma,\beta$ per layer & angle optimization and direct sampling & expected cut or sampled bitstring \\
Multi-angle QAOA \cite{Herrman2022MultiAngle} & one layer with $|E|+|V|$ angles & high-dimensional angle optimization with multiple starts & expected-energy approximation ratio \\
RWS-QAOA \cite{He2026RWSQAOA} & tilted product state followed by shallow QAOA & regularized nonconvex warm-start optimization; fixed QAOA schedule; optional local search & measured bitstrings \\
QRR/QRR* \cite{DupontSundar2024QRR,Dupont2025QRRStar} & $p=1$ two-point correlations & global correlation-matrix eigensolver and sign rounding; QRR* adds greedy local search & classically rounded cuts \\
QISS \cite{WyboFinzgar2026QISS} & shallow QAOA or RWS correlators & classical factor-model construction and Markov chain Monte Carlo & surrogate-generated cuts \\
Recursive QAOA/QRAO \cite{Bravyi2020RQAOA,KondoEtAl2025} & repeated shallow calls & correlation-driven variable elimination and a terminal solve & reconstructed final cut \\
\shortstack[l]{Rand-SEMI-\\{}QAOA\\{}(this work)} & one layer per candidate on $r$ relaxed qubits & finite product-label budget, one angle optimization per candidate, QRAC decoding, and best-of-set selection & graph-mean decoded best-of-set ratio \\
\bottomrule
\end{tabular}%
}}
\par
\end{table*}

Quantum random access optimization (QRAO) provides a different shallow ansatz
space.  It maps several classical variables to noncommuting Pauli observables
on fewer qubits, producing a compressed noncommuting relaxation of the original
objective \cite{FullerEtAl2024,TeramotoEtAl2023}.  Compression removes the
canonical computational-basis encoding and makes the initial state and mixer
part of the ansatz design.  This is consistent with the alternating-operator
view of QAOA and with results showing that state--mixer alignment can improve
shallow constrained optimization \cite{HadfieldEtAl2019,HeEtAl2023Alignment}.
Prior alternating-operator work for QRAO compares matched $X$, $Y$, and $Z$
product state--mixer pairs and transferable fixed parameters on random
3-regular MaxCut \cite{HeEtAl2025QRAO}.

A complete stabilizer MUB system supplies a finite, algebraically controlled
catalog of aligned states and mixers.  In dimension $d=2^r$, a complete system
contains $d+1$ orthonormal bases with squared cross-basis overlap $1/d$
\cite{WoottersFields1989,BandyopadhyayEtAl2002}; for qubits it can be
constructed from a symplectic spread of maximal commuting Pauli classes
\cite{LawrenceEtAl2002,Abdukhalikov2015}.  Excluding the computational basis
leaves $d$ families and $d^2=4^r$ state--mixer pairs.  MUBs have also been used
as discrete probes of variational landscapes \cite{MeiromAlfassiMor2024}; our
companion work studies basis-union coverage and adaptive family search in a
distinct numerical setting \cite{SemreFrankel2026}.

The full $4^r$ catalog is used here only for discovery.  It identifies the
unique product-$X$ family of this construction as the stable leading family on
the validated $r=5,6,7$ registers.  We therefore define Stabilizer-Encoded
MUB-Informed QAOA (\SEMIQAOA) by restricting the candidate state--mixer pairs to
\(
\{(Z^v\ket{+}^{\otimes r},B_{0,v}):v\in\F_2^r\}
\).
\RandSEMIQAOA{} chooses a finite subset of labels uniformly without replacement
under a declared schedule $K(r)$.  The random label set is fixed by register
size and selector seed, not by an instance-specific approximate cut; every
selected candidate nevertheless receives its own classical angle optimization.

On the complete $n=18,20,22$ cohort of 2,400 random connected 3-regular graphs,
matched-budget \RandSEMIQAOA{} attains mean decoded ratio $0.9421$ at $p=1$,
and the $K(r)=r^2$ schedule attains $0.9319$.  The structured cross-family MUB
selector $\Xi_{123}$ attains $0.9439$, and the single
$\ket{+}^{\otimes r}$ QRAO baseline attains $0.7108$.  The first three numbers
are best-of-set statistics over independently optimized candidates.  They are
not single-circuit approximation ratios, worst-case guarantees, or evidence of
a total-resource advantage.

Exhaustive family scans, a zero-angle control, and a one-layer surrogate analysis support the product-family restriction.  First, exhaustive enumeration of
all noncomputational MUB families places the product family first in 18 of 20
validated $(r,p)$ family-mean cells and at every tested depth for $r=5,6,7$.
Second, a zero-angle control gives mean product-family ratio $0.4090$ before
QAOA and $0.7623$ after one optimized layer; at $K=64$, the exact expected
best-of-$K$ ratio is $0.602$ before QAOA and $0.923$ after it.  Third, the theory
identifies a one-layer relaxed-energy mechanism: typical entangled families
have extensive mixer syndromes whose individual branches dephase, and an exact
anisotropic Gaussian polymer expansion controls the coherent branch sum on the
natural entangled-family mixer scale.  The theory does not prove decoded-ratio
ordering or any $p>1$ result.

The main contributions are:
\begin{enumerate}[leftmargin=*,itemsep=2pt]
  \item the \SEMIQAOA{} framework and its uniform without-replacement
  \RandSEMIQAOA{} selector, with candidate and optimizer resources stated
  explicitly;
  \item complete-cohort depth-one decoded ratios $0.9421$ and $0.9319$ under
  matched and $r^2$ random product-label budgets, respectively, together with
  an explicitly censored scale extension through completed $n=34$ scans;
  \item an exhaustive stabilizer-MUB discovery campaign, a no-QAOA dynamical
  control, exact finite-population best-of-$K$ curves, and completeness-gated
  energy--decoder diagnostics; and
  \item deterministic branchwise syndrome dephasing and an annealed Gaussian
  separation theorem for the one-layer relaxed-energy response.
\end{enumerate}

We do not claim quantum advantage, lower total computation, or superiority to
Goemans--Williamson or modern classical MaxCut solvers
\cite{GoemansWilliamson1995}.  The result is instead a resource-explicit route
to high decoded performance from compressed, depth-one, product-family
QRAO--QAOA ensembles.

\section{The \texorpdfstring{\SEMIQAOA}{SEMI-QAOA} framework}
\label{sec:framework}

\subsection{QRAC relaxation of a spin objective}

Write a binary optimization objective in spin variables $z_i\in\{\pm1\}$ as
\begin{equation}
  C(z)=c+\sum_i a_i z_i+\sum_{i<j}b_{ij}z_i z_j.
  \label{eq:spin-objective}
\end{equation}
For a $(k,1)$ quantum random access code with fixed $k\le3$, variable $i$ is
assigned a one-qubit Pauli $P_i\in\{X_q,Y_q,Z_q\}$, with at most $k$
variables on each relaxed qubit and adjacent graph vertices assigned to
different qubits.
The relaxed Hamiltonian is
\begin{equation}
  H=cI+\sqrt{k}\sum_i a_iP_i+k\sum_{i<j}b_{ij}P_iP_j.
  \label{eq:qrao-hamiltonian}
\end{equation}
For MaxCut, $a_i=0$ and every graph edge contributes a weight-two Pauli term.
The implementation uses an exact minimum coloring and splits every color
class into chunks of at most three variables; the axes in a chunk are distinct
and cyclically ordered.  This guarantees that no objective edge collapses
onto one relaxed qubit.  QRAO, its magic-state approximation guarantees, and
deterministic Pauli rounding were introduced by Fuller et al.
\cite{FullerEtAl2024}; Teramoto et al.\ extend the theory to additional QRAC
relaxations \cite{TeramotoEtAl2023}.  Subsequent work includes recursive QRAO
\cite{KondoEtAl2025} and explicit noise and shot-complexity analysis
\cite{TamuraEtAl2024}.  The connection to quantum random access codes
originates in dense quantum coding \cite{AmbainisEtAl2002}.

The reported decoded score is the deterministic Pauli-sign rounding of
Ref.~\cite{FullerEtAl2024} used by the implementation.  For MaxCut graph $G$,
with exact optimum
$C_{\max}(G)$, the approximation ratio is
\begin{equation}
  \alpha(\rho;G)=\frac{C(\widehat z(\rho))}{C_{\max}(G)},
  \qquad
  \widehat z_i=\operatorname{sgn}_0\!\bigl(\Tr(\rho P_i)\bigr),
  \label{eq:decoded-ratio}
\end{equation}

Throughout this work the deterministic tie convention is
\begin{equation}
\operatorname{sgn}_0(x)=
\begin{cases}
+1, & x\ge 0,\\
-1, & x<0.
\end{cases}
\label{eq:decoder-tie}
\end{equation}
Thus an exactly vanishing one-body expectation is decoded as $+1$.  The
convention is fixed before candidate evaluation.  Because the zero-angle
control contains exact zero expectations, that control is used only as a
convention-dependent diagnostic and not as primary evidence for the
post-optimization ordering.

For a pure state, we write $\alpha(\Psi;G)$ as shorthand for
$\alpha(\ket{\Psi}\!\bra{\Psi};G)$.  This nonlinear map is distinct
from the relaxed energy $\Tr(\rho H)$.  In particular, maximizing the latter
need not maximize Eq.~\eqref{eq:decoded-ratio}.

\subsection{Stabilizer MUB catalog}

Let $V=\F_2^r$ and $d=2^r$.  The $r$-qubit nonidentity Paulis can be
partitioned into $d+1$ maximal commuting classes whose joint eigenbases are
mutually unbiased \cite{LawrenceEtAl2002}.  A binary symplectic spread gives symmetric
matrices $M_j\in\F_2^{r\times r}$ indexed by $j\in V$, such that
$M_j-M_{j'}$ is nonsingular whenever $j\ne j'$.  In the Hermitian Pauli
convention $P(x,z)$, define the maximal commuting subgroup
\begin{equation}
  L_j=\{P(x,M_jx):x\in V\}.
\end{equation}
Its joint eigenstates form one basis
$\Family_j=\{\ket{\psi_{j,v}}:v\in V\}$, and the $d$ families together with
the computational basis form a complete MUB system
\cite{WoottersFields1989,BandyopadhyayEtAl2002,Abdukhalikov2015}.  We enumerate only the
$d$ noncomputational families, hence $d^2=4^r$ states per graph.

Writing the symmetric matrix entries as $(M_j)_{qq'}$, one compatible
Clifford preparation is
\begin{align}
  C_{j,v}={}&
  \left(\prod_{1\le q<q'\le r}CZ_{qq'}^{(M_j)_{qq'}}\right)\nonumber\\[-2pt]
  &{}\times
  \left(\prod_{q=1}^{r}S_q^{(M_j)_{qq}}\right)
  Z^vH^{\otimes r}.
  \label{eq:mub-clifford}
\end{align}
The corresponding state is
$\ket{\psi_{j,v}}=C_{j,v}\ket{0}^{\otimes r}$, up to a global phase.  At
field element $j=0$, all phase and controlled-$Z$ exponents vanish, so
\begin{equation}
  \ket{\psi_{0,v}}=Z^v\ket{+}^{\otimes r}.
  \label{eq:product-family}
\end{equation}
No $S$ gates and no $CZ$ gates occur in this product family.  Its only
label-dependent state-preparation operation is the $Z^v$ sign pattern applied
to $\ket{+}^{\otimes r}$.
Every stabilizer generator is therefore one-local.  In the implemented
spread catalog, direct inspection confirms that no other noncomputational
family has exclusively one-local stabilizer generators.  We use ``family 1'' for the one-based implementation label and
``$j=0$'' for its field label.

Let $g_{j,v,1},\ldots,g_{j,v,r}$ be independent commuting signed Hermitian
Paulis satisfying $g_{j,v,q}\ket{\psi_{j,v}}=\ket{\psi_{j,v}}$.  The matched
mixer convention used in the experiments is
\begin{equation}
  B_{j,v}=-\sum_{q=1}^r g_{j,v,q},
  \qquad B_{j,v}\ket{\psi_{j,v}}=-r\ket{\psi_{j,v}}.
  \label{eq:mixer-convention}
\end{equation}
For $j=0$, $g_{0,v,q}=(-1)^{v_q}X_q$, hence the matched signed-$X$ mixer is
\begin{equation}
  B_{0,v}=-\sum_{q=1}^{r}(-1)^{v_q}X_q.
  \label{eq:product-mixer}
\end{equation}
The product-family ansatz catalog is therefore
\begin{equation}
 \mathcal S_r=
 \left\{
   \left(Z^v\ket{+}^{\otimes r},B_{0,v}\right):v\in\F_2^r
 \right\}.
 \label{eq:semi-family}
\end{equation}
The depth-$p$ state is
\begin{equation}
\begin{split}
  \ket{\Psi_{j,v}(\bm\gamma,\bm\beta)}
  ={}&e^{-\ii\beta_p B_{j,v}}e^{-\ii\gamma_p H}\cdots\\
     &{}\times e^{-\ii\beta_1 B_{j,v}}e^{-\ii\gamma_1 H}
  \ket{\psi_{j,v}}.
\end{split}
  \label{eq:qaoa-state}
\end{equation}
\begin{definition}[\SEMIQAOA]
Given a QRAO Hamiltonian on $r$ relaxed qubits, \SEMIQAOA{} uses candidate
state--mixer pairs only from the product-$X$ stabilizer-MUB family
$\mathcal S_r$ in Eq.~\eqref{eq:semi-family}.  A label-selection rule chooses
a finite set $S_r\subseteq\F_2^r$; every selected pair receives an independent
depth-$p$ angle optimization, is decoded by the stated classical rule, and
competes in a best-of-set selection.  The product-family restriction is part
of the definition.  Here ``Stabilizer-Encoded'' refers to the stabilizer-state
MUB ansatz representation, not to a quantum error-correcting code.
\end{definition}

\begin{definition}[\RandSEMIQAOA]
For a declared candidate-budget schedule $K(r)$, \RandSEMIQAOA{} is
\SEMIQAOA{} with $S_r$ sampled uniformly without replacement from $\F_2^r$,
with
\begin{equation}
 |S_r|=\min\{K(r),2^r\}.
 \label{eq:rand-semi-budget}
\end{equation}
When the requested budget reaches $2^r$, the selector is the complete product
family and the random draw is trivial.  The name denotes the random
product-label rule independently of the particular schedule $K(r)$.
\end{definition}
Here $p$ counts alternating cost--mixer layers for one candidate.  In the
performance experiment, each random product-label set is fixed for its
register size and chosen without using an instance-specific approximate cut;
the angles of every label are optimized independently on each graph.  State
preparation, the number of candidates, classical angle optimization, and
decoded-score selection are therefore resources in addition to $p$.
Equation~\eqref{eq:mixer-convention} fixes an otherwise easy-to-miss sign:
for a Pauli $T$ anticommuting with a generator $g$,
\begin{equation}
 e^{-\ii\beta g}T e^{\ii\beta g}
 =\cos(2\beta)T-\ii\sin(2\beta)gT.
 \label{eq:pauli-conjugation}
\end{equation}
All branch formulas below follow this physical convention.  Using
$B=+\sum g$ would complex-conjugate the radial phases and reverse the small
cost-angle slope.

For $v=0$ in the product family,
$\ket{\psi_{0,0}}=\ket{+}^{\otimes r}$ and
$B_{0,0}=-\sum_qX_q$.  The baseline in every campaign is exactly this one
QRAO-QAOA label; it is not standard QAOA on the original $n$-qubit diagonal
MaxCut Hamiltonian.

The candidate catalog in this work consists of the noncomputational families
of the implemented Galois stabilizer-MUB construction. The computational-basis
product family is excluded by design because its states do not have uniform
computational-basis amplitude support, which is imposed here as an
initialization requirement. Accordingly, the exhaustive family comparison
does not claim that a matched-\(Z\) computational-basis ansatz is inferior; it
is an untested control outside the present catalog.

\subsection{Exact product-label gauge equivalence}
\label{subsec:product-label-gauge}

The product-family label has an exact gauge interpretation. Let
\begin{equation}
\begin{aligned}
 W_v&=Z_1^{v_1}\cdots Z_r^{v_r},\\
 B_0&=-\sum_{q=1}^{r}X_q,
 & H_v&=W_v H W_v .
\end{aligned}
\end{equation}
Then $\ket{\psi_{0,v}}=W_v\ket{+}^{\otimes r}$ and
$B_{0,v}=W_vB_0W_v$.

\begin{proposition}[Product-label gauge equivalence]
\label{prop:product-label-gauge}
For every depth \(p\) and all angle vectors
\(\boldsymbol{\gamma},\boldsymbol{\beta}\),
\begin{equation}
\lvert\Psi_{0,v}(\boldsymbol{\gamma},\boldsymbol{\beta};H)\rangle
=
W_v\,
\lvert\Psi_{0,0}(\boldsymbol{\gamma},\boldsymbol{\beta};H_v)\rangle .
\label{eq:product-label-gauge}
\end{equation}
For a logical variable \(i\) encoded by \(P_i\in\{X_{q(i)},Y_{q(i)},Z_{q(i)}\}\),
define
\begin{equation}
s_i(v)=
\begin{cases}
(-1)^{v_{q(i)}}, & P_i\in\{X_{q(i)},Y_{q(i)}\},\\
1, & P_i=Z_{q(i)}.
\end{cases}
\end{equation}
The corresponding one-body expectation satisfies
\begin{equation}
\langle P_i\rangle_{0,v;H}
=
s_i(v)\,\langle P_i\rangle_{0,0;H_v}.
\label{eq:gauge-one-body}
\end{equation}
Whenever the expectation is nonzero, the decoded spin transforms by the same
factor $s_i(v)$.  At exact zero, Eq.~\eqref{eq:decoder-tie} is applied to each
expectation separately, and no multiplicative decoded-spin relation is
asserted.
\end{proposition}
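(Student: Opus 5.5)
The plan is a direct conjugation argument, since $W_v$ is a Hermitian involutive unitary ($W_v^2=I$, $W_v^\dagger=W_v$). First I check the two identities stated just before the proposition. Because $Z_q\ket{+}=\ket{-}$, we get $W_v\ket{+}^{\otimes r}=Z^v\ket{+}^{\otimes r}=\ket{\psi_{0,v}}$ by Eq.~\eqref{eq:product-family}. Moreover $Z_q X_q Z_q=-X_q$ while $Z_{q'}$ commutes with $X_q$ for $q'\neq q$, so $W_vX_qW_v=(-1)^{v_q}X_q$. Summing over $q$ gives $W_vB_0W_v=B_{0,v}$, matching Eq.~\eqref{eq:product-mixer}.

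Next I use functional calculus under unitary conjugation. For any Hermitian $A$ and real $t$, $e^{-\ii t W_vAW_v}=W_ve^{-\ii tA}W_v$. Applying this gives $e^{-\ii\beta_\ell B_{0,v}}=W_ve^{-\ii\beta_\ell B_0}W_v$. Since $W_v^2=I$, we may also write $H=W_vH_vW_v$, so $e^{-\ii\gamma_\ell H}=W_ve^{-\ii\gamma_\ell H_v}W_v$. Substituting these into Eq.~\eqref{eq:qaoa-state} with initial state $W_v\ket{+}^{\otimes r}$, every interior pair $W_vW_v$ cancels. This leaves $W_v$ on the left acting on the depth-$p$ product $e^{-\ii\beta_pB_0}\cdots e^{-\ii\gamma_1H_v}\ket{+}^{\otimes r}$, which is exactly $\ket{\Psi_{0,0}(\bm\gamma,\bm\beta;H_v)}$. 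This proves Eq.~\eqref{eq:product-label-gauge} for all $p$ and all angles. The argument is an induction on $p$, or equally a telescoping, and it yields exact equality with no global-phase ambiguity, because Eq.~\eqref{eq:qaoa-state} fixes the states themselves.

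For the one-body relation, $\langle P_i\rangle_{0,v;H}=\bra{\Psi_{0,0;H_v}}W_vP_iW_v\ket{\Psi_{0,0;H_v}}$, so it suffices to compute $W_vP_iW_v$. Only the factor $Z_{q(i)}^{v_{q(i)}}$ fails to commute trivially with $P_i$. If $P_i\in\{X_{q(i)},Y_{q(i)}\}$, conjugation by $Z_{q(i)}$ flips the sign, giving $(-1)^{v_{q(i)}}P_i$. If $P_i=Z_{q(i)}$, the operator is invariant. This is exactly $s_i(v)P_i$, which gives Eq.~\eqref{eq:gauge-one-body}.

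Finally, for the decoding statement: when $x=\langle P_i\rangle_{0,0;H_v}\neq0$, we have $\operatorname{sgn}_0(s x)=s\operatorname{sgn}_0(x)$ for $s=\pm1$, since $\operatorname{sgn}_0$ is odd away from zero. At $x=0$ both sides decode to $+1$ under Eq.~\eqref{eq:decoder-tie}, so the multiplicative relation can fail when $s_i(v)=-1$. This is why the statement makes no claim there. No step is genuinely hard. The only point that needs care is the sign bookkeeping, namely confirming that $Y$ flips under $Z$-conjugation just as $X$ does, and that the Hermitian involution property lets $H$ be rewritten through $H_v$ consistently at every layer.
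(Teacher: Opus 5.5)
Your proof is correct and follows the same route as the paper. Conjugating the initial state, the mixer, and each cost exponential by the Hermitian involution $W_v$ telescopes a single $W_v$ to the left of the circuit, and the one-body relation then follows because $Z_{q(i)}$ flips the signs of $X_{q(i)}$ and $Y_{q(i)}$ while fixing $Z_{q(i)}$. Your explicit check that $\operatorname{sgn}_0(sx)=s\operatorname{sgn}_0(x)$ holds for $x\neq0$ but fails at $x=0$ when $s=-1$ fills in the decoded-spin clause that the paper's proof leaves implicit.
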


\begin{proof}
Conjugation by \(W_v\) gives
\(B_{0,v}=W_vB_0W_v\) and
\(\exp(-\mathrm{i}\gamma H)W_v
 =W_v\exp(-\mathrm{i}\gamma H_v)\).
Applying these identities layer by layer moves one \(W_v\) to the left of the
entire circuit and yields Eq.~\eqref{eq:product-label-gauge}. Conjugation by
\(Z_{q(i)}\) changes the sign of \(X_{q(i)}\) and \(Y_{q(i)}\) and leaves
\(Z_{q(i)}\) invariant, which proves Eq.~\eqref{eq:gauge-one-body}.
\end{proof}

Thus Rand-SEMI-QAOA may equivalently be viewed as searching the orbit of $H$
under the $r$-dimensional sign-gauge group while using the fixed state
$\ket{+}^{\otimes r}$ and the fixed $X$ mixer.
This identity does not imply an ordering of labels or of decoded optima.

\subsection{Four distinct performance objects}
\label{sec:metrics}

We keep four objects separate throughout:

\begin{definition}[Family mean]
For graph $G$ and fixed depth, the decoded family mean is
\begin{equation}
  \overline\alpha_j(G)=d^{-1}\sum_{v\in V}
  \alpha\!\left(\Psi_{j,v}^{\star};G\right),
  \label{eq:family-mean}
\end{equation}
where the star denotes independent angle optimization for that label.  The
graph is the inferential unit; reported family means average
$\overline\alpha_j(G)$ over graphs.
\end{definition}

\begin{definition}[Best of a candidate set]
For an offered set $S$ of state labels,
\begin{equation}
  \alpha_{\max}(S;G)=\max_{(j,v)\in S}
  \alpha\!\left(\Psi_{j,v}^{\star};G\right).
  \label{eq:best-of-set}
\end{equation}
This is an order statistic and grows mechanically with the number and
dependence structure of candidates.
\end{definition}

The \emph{relaxed energy} is
$E_{j,v}(\bm\gamma,\bm\beta)=\langle\Psi_{j,v}|H|\Psi_{j,v}\rangle$.
The theoretical \emph{family response} is a normalized, label-averaged change
in this energy.  Neither quantity equals the decoded ratio.  Finally, the
\emph{branch amplitude} introduced in Sec.~\ref{sec:mechanism} is an exact
trace term in the energy response; bounding one branch does not bound a
coherent sum unless its phases are controlled.

\subsection{Candidate cost and exact \texorpdfstring{best-of-$K$}{best-of-K} control}

The exhaustive noncomputational catalog contains $4^r$ labels.  A candidate
strategy pays one independent angle optimization per distinct label in its
union.  The product family alone contains $2^r$ states, and the sampled
selectors of Sec.~\ref{sec:shallow-performance} use at most a few hundred
labels over the studied range.  We use ``best-of-$K$'' for a maximum over one
offered set of $K$ candidates; it does not mean $K$ samples from one optimized
circuit.

For a finite family with decoded scores
$y_{(1)}\le\cdots\le y_{(N)}$, the expected maximum of a uniform sample of
$K$ labels without replacement is available exactly:
\begin{equation}
 \E[\max_K]
 =\sum_{i=K}^{N} y_{(i)}
   \frac{\binom{i-1}{K-1}}{\binom{N}{K}}.
 \label{eq:best-k-exact}
\end{equation}
We apply Eq.~\eqref{eq:best-k-exact} within each graph and only then average
over graphs.  It gives a resource curve without Monte Carlo noise and exposes
how much apparent improvement is due solely to testing more states.

\section{Finite-budget \texorpdfstring{\RandSEMIQAOA}{Rand-SEMI-QAOA}}
\label{sec:shallow-performance}
\label{sec:sampling}

The main performance experiment asks how much decoded solution quality can be
obtained from one cost--mixer alternation per candidate, without first
computing a problem-specific approximate cut to initialize the quantum state.
It compares two \RandSEMIQAOA{} budget schedules with a cross-family MUB
selector and a single-label baseline.  Each arm is a hybrid best-of-$K$
procedure, not a single circuit run: every candidate receives an independent
angle optimization, and the candidate with the largest stored decoded ratio
is retained.  Thus $p=1$ describes the quantum alternation depth of each
candidate, while $K$, the angle-search budget, state preparation, and decoding
are separate resources.

\subsection{Rand-SEMI-QAOA budgets and comparison selectors}

The exhaustive study in Sec.~\ref{sec:exhaustive} motivates three finite
selectors.  The structured MUB selector $\Xi_{123}(r)$ contains a dominant
product-family branch, a smaller basis-2 branch, and four basis-3 labels only
for $r\le6$.  It is not \SEMIQAOA{} because it leaves the product-$X$ family.
With $q=2^{r-2}$, $c=(r-1)\bmod q$, and cyclic distance
$\operatorname{dist}_q$, its two extended branches are
\begin{align}
 &(1,v):\operatorname{dist}_q(v\bmod q,c)\le\rho_1(r),\nonumber\\
 &(2,v):\operatorname{dist}_q(v\bmod q,c)\le
   \left\lfloor\sqrt{3r^2/4}\right\rfloor,
 \label{eq:xi-rule}
\end{align}
where
\begin{equation}
 \rho_1(r)=
 \begin{cases}
 r,&r=4,5,\\
 8,&r=6,\\
 \lfloor19r/8\rfloor,&r\ge7.
 \end{cases}
\end{equation}
The notation records the historical three-branch rule; it does not denote
three complete MUB families.

The numerical constants in $\Xi_{123}$ reproduce a historical cross-family
rule from the preceding discovery campaign.  The rule was frozen before the
complete $2{,}400$-graph scale cohort was evaluated and was not fitted on that
cohort.

Both product-only arms are instances of \RandSEMIQAOA{} under different
budget schedules:
\begin{equation}
\begin{aligned}
 K_{r^2}(r)&=\min\{r^2,2^r\},\\
 K_{\rm match}(r)&=\min\{|\Xi_{123}(r)|,2^r\}.
\end{aligned}
\label{eq:rand-budget-schedules}
\end{equation}
The first is the $r^2$-budget arm; the second is the matched-budget arm and is
the flagship schedule.
For each register size and budget schedule, one random product-label subset is
generated once and reused across graphs and depths; the two budget schedules
use independently generated frozen subsets.  The experiment therefore tests
one frozen pool per schedule rather than averaging over selector draws.
At $r=6,7$, $K_{\rm match}=2^r$: the matched-budget arm contains the complete
product family, the draw is trivial, and its cardinality is smaller than
$|\Xi_{123}|$.  Exact cardinality matching with $\Xi_{123}$ begins at $r=8$.
Consequently the matched Rand-SEMI schedule and $\Xi_{123}$ have $O(r)$
cardinality, whereas the second Rand-SEMI schedule is $O(r^2)$; all are
polynomial substitutes for the $4^r$ discovery catalog.

\begin{definition}[Experimental \RandSEMIQAOA{} protocol]
\label{def:rand-semi}
For a graph $G$, relaxed register size $r$, and declared schedule $K(r)$:
\begin{enumerate}[leftmargin=*,itemsep=1pt]
  \item construct its $(3,1)$-QRAC/QRAO Hamiltonian;
  \item use the fixed set $V_r^{\rm Rand}(K)\subseteq\F_2^r$, sampled
  uniformly without replacement with size $\min\{K(r),2^r\}$;
  \item for every $v\in V_r^{\rm Rand}(K)$, prepare
  $Z^v\ket{+}^{\otimes r}$ and use
  $B_{0,v}=-\sum_{q=1}^r(-1)^{v_q}X_q$;
  \item independently optimize the depth-$p$ QAOA angles for that candidate;
  \item apply the deterministic Pauli-sign decoder and return the candidate
  with the largest decoded MaxCut score.
\end{enumerate}
When $K(r)\ge2^r$, step 2 returns the full product family.  Otherwise the
fixed subset depends on $r$, the budget schedule, and the disclosed selector
seed, not on a classically computed approximate solution of the individual
graph.
\end{definition}

\begin{table*}[t]
 \centering
 \caption{Candidate resources in the scale campaign.  The two Rand-SEMI
 columns use the same uniform product-family mechanism under the $r^2$ and
 matched candidate-budget schedules.  $\Xi_{123}$ is a cross-family structured
 MUB comparator.  Arm overlap means the
 physical union is smaller than the sum of displayed arm sizes.  One
 $\ket{+}^{\otimes r}$ baseline optimization is added per graph.  The executed
 union also contains a legacy arm omitted from the primary comparison; its
 contribution remains in the physical count and is itemized in
 Appendix~\ref{app:experiments}.}
 \label{tab:resource-counts}
 \footnotesize
\begin{tabular*}{\textwidth}{@{\extracolsep{\fill}}ccccccc@{}}
\toprule
$r$ & \shortstack{structured\\$\Xi_{123}$} & \shortstack{$r^2$-budget\\Rand-SEMI} & \shortstack{matched-budget\\Rand-SEMI} & \shortstack{product\\family} & \shortstack{executed\\union$^\dagger$} & \shortstack{noncomputational\\MUB catalog} \\
\midrule
6 & 112 & 36 & 64 & 64 & 112 & 4,096 \\
7 & 180 & 49 & 128 & 128 & 184 & 16,384 \\
8 & 208 & 64 & 208 & 256 & 302 & 65,536 \\
9 & 232 & 81 & 232 & 512 & 414 & 262,144 \\
10 & 256 & 100 & 256 & 1,024 & 528 & 1,048,576 \\
11 & 288 & 121 & 288 & 2,048 & 641 & 4,194,304 \\
12 & 312 & 144 & 312 & 4,096 & 742 & 16,777,216 \\
\bottomrule
\end{tabular*}
\par\smallskip
\begin{minipage}{0.94\textwidth}\footnotesize
$^\dagger$The executed union is the deduplicated physical campaign union of the displayed arms and the legacy $R_E$ arm; one baseline run is additional. Detailed overlap accounting appears in Appendix~\ref{app:experiments}, Table~\ref{tab:resource-legacy}.
\end{minipage}

\end{table*}

\begin{figure*}[t]
 \centering
 \includegraphics[width=\textwidth]{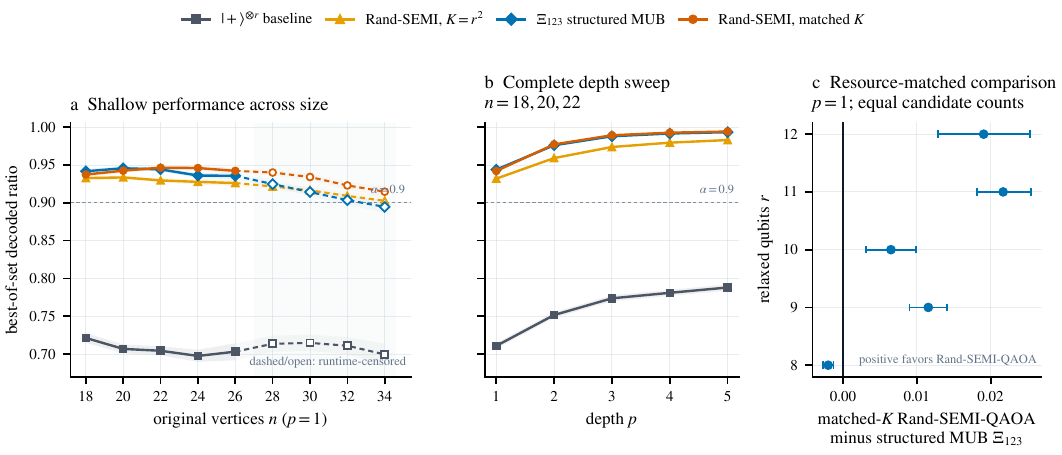}
 \caption{Shallow decoded MaxCut performance under an explicit candidate
 budget.  The legend compares the $r^2$-budget and matched-budget
 \RandSEMIQAOA{} schedules with the structured MUB selector $\Xi_{123}$ and
 the single-state baseline.  \textbf{a}, depth-one best-of-set ratio versus original graph size.
 Solid lines and filled markers denote the complete or compositionally
 reliable portion through $n=26$; dashed lines, open markers, and light
 shading mark the runtime-censored $n\ge28$ analysis freeze.  \textbf{b}, depth sweep
 on the complete $n=18,20,22$ cohorts (2,400 graphs per point).
 \textbf{c}, graph-paired matched-$K$ \RandSEMIQAOA{} minus the structured
 MUB selector $\Xi_{123}$ at $p=1$ where candidate counts
 are exactly matched ($r=8,\ldots,12$).  Whiskers and ribbons are 95\%
 graph-bootstrap intervals.  The $\alpha=0.9$ lines are visual references,
 not theoretical thresholds.  A single legend applies to panels a and b.}
 \label{fig:shallow-performance}
\end{figure*}

For the complete $p=1$ cohort below, $r=6,7,8,9$.  Across those registers,
$K=64$--232 for matched-budget \RandSEMIQAOA, $K=36$--81 for
$r^2$-budget \RandSEMIQAOA, and $K=112$--232 for $\Xi_{123}$.  Because arms overlap, the executed
campaign optimizes 112, 184, 302, or 414 distinct candidate states at these
register sizes, plus one baseline.  The mean physical campaign count is 275.27
candidate state--mixer optimizations per graph, including the baseline.  The complete resource table also
shows the exponentially larger $2^r$ product family and $4^r$
noncomputational MUB catalog.  Candidate search therefore moves work outside
QAOA depth; no total-resource or time-to-solution advantage is claimed.

\subsection{Depth-one performance, scale, and depth persistence}

The scale data combine three campaigns with identical graph generation,
selector, and optimizer settings.  The complete inferential cohort contains
800 random connected 3-regular graphs at each of $n=18,20,22$, hence 2,400
graphs per depth.  Their QRAC packings use $r=6,\ldots,9$ relaxed qubits:
the graph-weighted means are $\E[r/n]=0.3876$ (range $1/3$ to $4/9$)
and $\E[n/r]=2.594$.  This is qubit compression of the
noncommuting Hamiltonian, not a claim about total circuit or optimization
cost.

At $p=1$, let $\alpha_{\rm match}$ and $\alpha_{r^2}$ denote the two
Rand-SEMI-QAOA schedules, $\alpha_{\Xi}$ the structured MUB selector, and
$\alpha_{+}$ the single-state baseline.  Their graph-mean decoded ratios are
\begin{equation}
\begin{aligned}
 \alpha_{\rm match}&=0.9421\ [0.9406,0.9435],\\
 \alpha_{r^2}&=0.9319\ [0.9304,0.9334],\\
 \alpha_{\Xi}&=0.9439\ [0.9424,0.9453],\\
 \alpha_{+}&=0.7108\ [0.7072,0.7143].
\end{aligned}
\label{eq:headline-performance}
\end{equation}
Intervals are 95\% graph-bootstrap intervals.  Every candidate in the first
three rows receives the same $33\times65$ angle grid and three L-BFGS-B local
starts before best-of-$K$ selection on decoded score.  No instance-specific
classically optimized approximate cut or initial state is used as the warm
start.

Figure~\ref{fig:shallow-performance}a separates the complete-cohort statement
from a descriptive scale observation.  Both observed \RandSEMIQAOA{} budget
schedules are above $0.9$ at every displayed $n$; at the 65 completed $n=34$
scans, the matched and $r^2$ schedules are $0.9147$ and $0.9026$.  The
structured MUB selector $\Xi_{123}$
falls from $0.9036$ at $n=32$ to $0.8946$ at $n=34$, so the through-$n=34$
observation applies to the two pure-product selectors, not to every selector.
Moreover, the $n\ge28$ tail is runtime-censored: shorter-register tasks finish
first, and the completed graphs do not reproduce the intended register-size
mixture.  Those points are therefore not unbiased population estimates.

The complete depth sweep in Fig.~\ref{fig:shallow-performance}b remains a
best-of-set experiment.  Matched-budget \RandSEMIQAOA{} rises from $0.9421$
at $p=1$ to $0.9942$ at $p=5$, while the $r^2$-budget instance rises from
$0.9319$ to $0.9829$ and $\Xi_{123}$ from $0.9439$ to $0.9934$.
These $p>1$ results are experimental;
the mechanism theory in Sec.~\ref{sec:mechanism} is one-layer only.

Panel c tests selector geometry at equal candidate count.  At $r=8$,
matched-budget \RandSEMIQAOA{} trails $\Xi_{123}$ by $0.0020$.  At
$r=9,10,11,12$ it leads
by $0.0115$, $0.0065$, $0.0217$, and $0.0190$, respectively, with paired
bootstrap intervals above zero.  The fitted index bands in $\Xi_{123}$ thus
do not provide a persistent advantage over the one fixed random product
subset.  Since only one subset seed is available at each $r$, this is
conditional on those subsets and does not estimate random-selector
variability.

  Over the complete $n=18,20,22$ cohorts,
matched-budget \RandSEMIQAOA{} attains mean decoded ratio $0.9421$ with a
finite ensemble of independently optimized depth-one circuits on compressed
QRAO registers, without an instance-specific classically optimized warm-start
solution.  It is not a single-circuit result, a worst-case guarantee, or a
comparison with classical MaxCut solvers.

\section{Exhaustive MUB discovery}
\label{sec:exhaustive}

\RandSEMIQAOA{} was not motivated by assuming that product states must be
favorable.  Its product-family restriction was distilled from a completed
exhaustive enumeration campaign over the full noncomputational stabilizer-MUB catalog.
The exhaustive all-family search is the discovery procedure, not itself a
\SEMIQAOA{} execution.  This section reports that discovery statistic, which
is a family mean and must not be confused with the best-of-$K$ performance in
Sec.~\ref{sec:shallow-performance}.

\subsection{Complete exhaustive campaign}

For random connected 3-regular MaxCut, the discovery campaign uses
$n\in\{10,12,14,16,18\}$, 100 graphs per $(n,p)$ cell, and
$p=1,\ldots,5$.  The deterministic QRAC coloring produces
$r=4,\ldots,8$.  At each graph and depth the selector enumerates every one of
the $4^r$ states in the $2^r$ noncomputational MUB families, with $2^r$ labels
per family.  All candidates and the baseline receive the same angle optimizer.

The authenticated completed merge contains 20,308,480 state evaluations and
2,500 baseline evaluations.  Every planned $(n,p)$ cell contains 100 complete
graph scans.  Excluding the historical $r=8$ nonproduct data leaves 9,822,720
validated state rows, 2,340 graph scans, and 20 validated $(r,p)$ aggregations
for $r=4,\ldots,7$ and $p=1,\ldots,5$.  Additional implementation and validation details appear in
Appendix~\ref{app:experiments}.

\subsection{Product-family rank across register and depth}

For each graph and family we first average decoded ratio over all $2^r$
labels, as in Eq.~\eqref{eq:family-mean}.  At fixed $(r,p)$ we then average
over graphs, identify the strongest nonproduct family, and form paired graph
differences.  Figure~\ref{fig:family-discovery} reports the exact graph-first
pattern; the complete 20-row numerical result is retained in
Appendix~\ref{app:experiments}, Table~\ref{tab:family-ranks}.

\begin{figure*}[t]
  \centering
  \includegraphics[width=\textwidth]{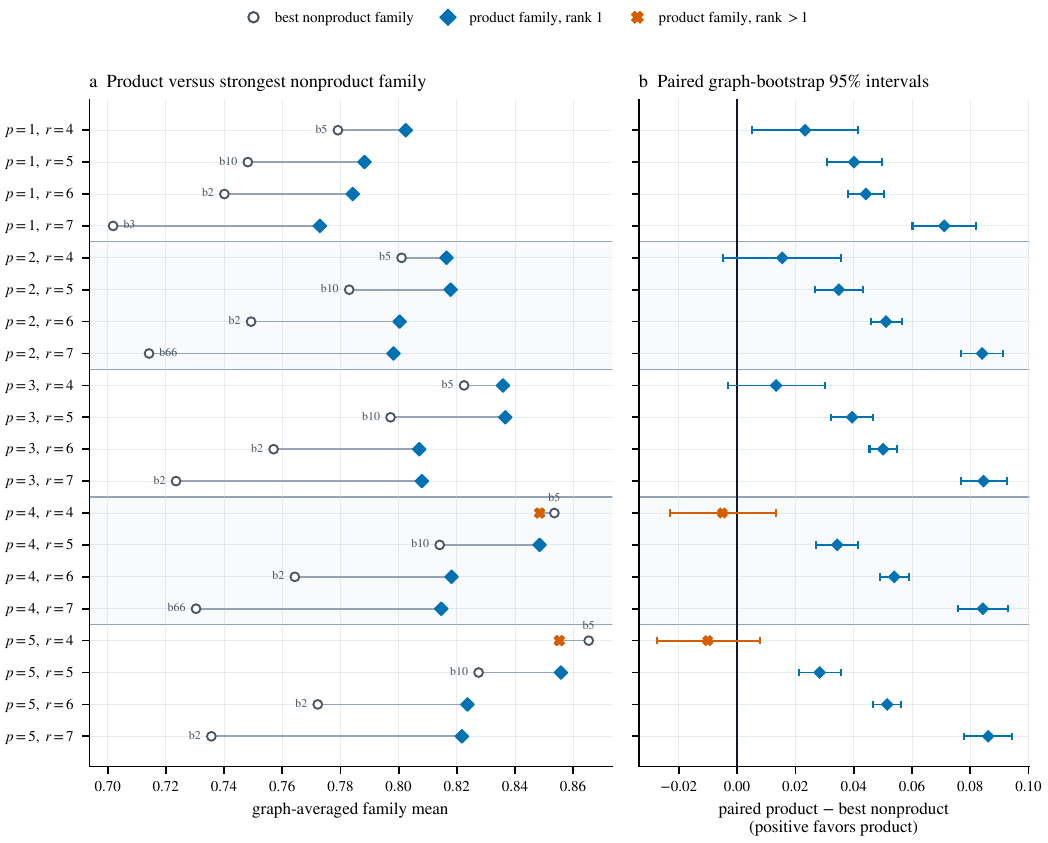}
  \caption{Exhaustive MaxCut family discovery in the 20 validated full-campaign
  aggregations.  \textbf{a}, product-family and selected strongest-nonproduct
  graph-mean decoded ratios, with the nonproduct basis labeled directly.
  \textbf{b}, paired product-minus-nonproduct margin with graph-bootstrap 95\%
  intervals (10,000 resamples) and a zero reference.  Every row enumerates all
  noncomputational families and all $2^r$ labels per family.  The intervals
  describe the selected same-sample comparator and are not simultaneous,
  selection-adjusted intervals.  Historical $r=8$ nonproduct data are excluded
  because they predate the finite-field correction.}
  \label{fig:family-discovery}
\end{figure*}

The product family ranks first in 18 of the 20 aggregations.  At the smallest
register, basis 5 leads at $p=4$ and $p=5$; the product family is second, with
paired margins $-0.0051$ and $-0.0101$.  At $r=5,6,7$, the product family
ranks first at every tested depth, including $p=5$.  Across its 18 leading
cells, the paired margin ranges from $0.0134$ to $0.0861$.  The strongest
nonproduct changes among bases 2, 3, 5, 10, and 66.  Equal family cardinality
and a common optimizer remove candidate count as an explanation for these
family means.

At $r=4$ and $p=2,3,4,5$, the descriptive intervals cross zero,
including the two cells in which basis 5 leads.  No ordering is inferred in
those cells.  We therefore report the finite-campaign ordering---18 rank-one
cells and persistence through depth five at $r=5,6,7$---without claiming a
universal product-family law.

\subsection{Why historical \texorpdfstring{$r=8$}{r=8} comparisons are excluded}
\label{sec:gf2-defect}

An earlier finite-field power-table recurrence omitted reduction modulo two
when multiple pentanomial taps contributed to one coefficient.  The defect
affects $r=8$ but not $r=4,5,6,7$.  The repaired catalog restores the MUB
overlap invariant.  Family 1 ($j=0$) is algebraically immune, but 248 of the
256 noncomputational $r=8$ bases change, so the historical all-family scan
cannot compare the product family with the corrected nonproduct catalog.  No
corrected exhaustive rerun is available.  Post-fix product-family scale data
at $r\ge8$ remain valid; historical $r=8$ nonproduct ranks and brightness
maps are excluded.

The exhaustive catalog therefore serves as a discovery instrument: it reveals
the unique product-$X$ family as the stable leading structure on the unaffected $r=5,6,7$ registers.  That finding motivates \SEMIQAOA{}'s restriction to the
product family and the two polynomial-size random budget schedules tested by
\RandSEMIQAOA{}.  The cross-family $\Xi_{123}$ pool remains a structured MUB
comparator rather than a \SEMIQAOA{} candidate set.  The discovery does not
imply that every product subset or every register--depth cell must win.

\section{Depth-one dynamical lift}
\label{sec:dynamics}

High best-of-$K$ performance could arise either because the candidate states
already decode to favorable cuts or because the cost--mixer evolution changes
their decoded observables.  A zero-angle control and an exact finite-population
order-statistic calculation separate these effects within the product family
used by both tested \RandSEMIQAOA{} budget schedules.

\subsection{Raw states versus depth-one optimization}

The control uses 72 graphs spanning nine $(n,r)$ cells.  For $r\le8$ it
evaluates all $2^r$ product labels; at $r=9,10$ it uses a fixed uniform sample
of 64.  ``Raw'' means sign decoding at $\gamma=\beta=0$, verified against
zero-angle circuit evolution.  ``Post-$p=1$'' means the same labels after the
campaign angle optimizer.

\begin{figure*}[t]
  \centering
  \includegraphics[width=\textwidth]{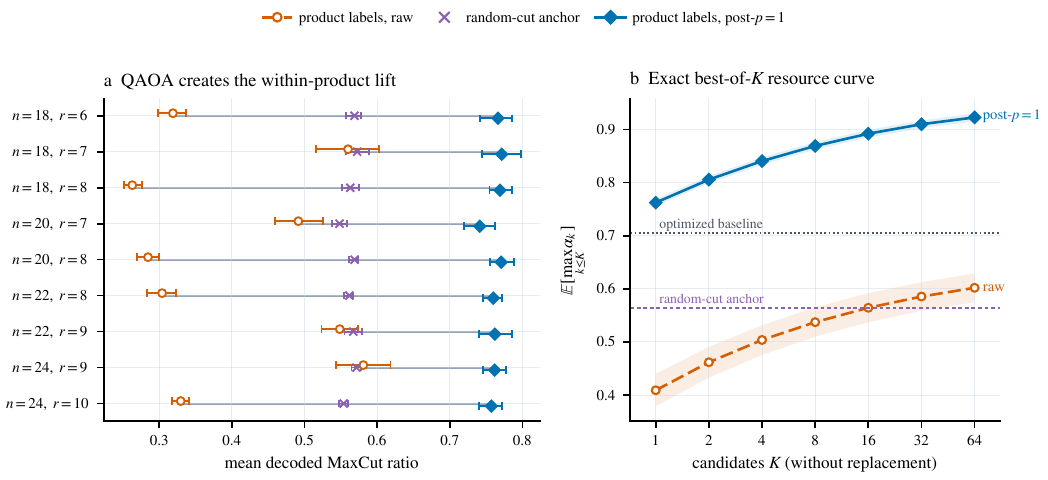}
  \caption{No-QAOA and candidate-count controls.  \textbf{a}, cell-level mean
  decoded ratios for raw product labels, the random-cut anchor, and the same
  labels after depth-one optimization.  Labels give $n/r$; every cell contains
  eight graphs.  \textbf{b}, exact expected best decoded ratio from $K$
  product labels sampled uniformly without replacement.  Equation~\eqref{eq:best-k-exact}
  is evaluated within each graph before graph averaging; ribbons are 95\%
  graph-bootstrap intervals.  Horizontal references show the random-cut
  anchor and optimized $\ket{+}^{\otimes r}$ QRAO baseline.}
  \label{fig:dynamical-origin}
\end{figure*}

Across the nine equally weighted cells, the mean product-family ratio is
$0.4090$ before QAOA, below the $0.5640$ random-cut anchor, and $0.7623$ after
depth-one optimization.  The mean within-product lift is therefore $0.3533$.
For graph $G$, the anchor is $(|E(G)|/2)/C_{\max}(G)$---the expected cut of a
uniform random bitstring divided by the exact maximum---and it is averaged
graph first within each cell.
For every even-$r$ cell, the raw $\ket{+}^{\otimes r}$ sign decode is zero on
all sampled graphs, while its optimized ratio lies between $0.67$ and $0.73$.
This parity artifact reinforces that a raw deterministic sign decoder is not
a proxy for variational response.

Figure~\ref{fig:dynamical-origin}b controls candidate count without Monte
Carlo sampling.  At $K=1$ the curves reproduce the raw and optimized means.
At $K=64$, the expected maximum is $0.602$ raw and $0.923$ post-$p=1$.  More
candidates help both, but the raw ensemble remains far below the optimized
one.  Thus the $>0.9$ best-of-$K$ regime is not obtained merely by decoding 64
classically lucky product labels.  This control establishes a within-product
dynamical lift; because it does not repeat the raw/post experiment for every
entangled family, it does not by itself explain the between-family ordering
in Fig.~\ref{fig:family-discovery}.

The curves use all stored product-family labels with $K\le64$, so all 72
graphs contribute at every displayed point. For $r=9,10$, only 64 labels were
stored; hence no claim is made about the full $512$- or $1{,}024$-state
product families.

Equation~\eqref{eq:best-k-exact} is also the exact finite-population
candidate-budget curve underlying \RandSEMIQAOA{} when $K$ product labels are
sampled uniformly without replacement.  It averages over all such subsets;
it is not the realized score of the one frozen experimental pool at each
budget.

\subsection{Relaxed energy and decoded score are not interchangeable}

The circuit angles are optimized against relaxed QRAO energy, whereas the
headline statistic selects the largest decoded MaxCut ratio.  We regenerated
their alignment with the same r-specific candidate-union completeness gate as
the primary scale analysis.  Of 12,449 stored graph--depth scans, 17 incomplete
unions are excluded, leaving 12,432 scans and 49,728 complete primary-arm
records whose arm maxima agree bit-for-bit with the main reduction.
The gate covers the executed physical union, including the legacy $R_E$
contribution described in Appendix~\ref{app:experiments}; it is not restricted
to the displayed scientific arms.  The frozen analysis archive retains exact per-scan
sufficient statistics rather than raw candidate-state rows.  Accordingly this
diagnostic describes objective mismatch in the executed campaign; it does not
validate the unresolved historical $r=8$ component of $R_E$.

Across complete scans, the within-scan Spearman correlation has mean $0.5285$
and median $0.5313$.  The relaxed-energy argmax is also a decoded-score
argmax, including ties, in $22.27\%$ of scans; selecting it instead of the
decoded argmax loses $0.06991$ in decoded ratio on average.  Giving each
$(n,p)$ cell equal weight yields $23.11\%$ and $0.06939$, respectively.  These
moderate correlations explain why energy optimization can improve the pool
without making energy a reliable state-selection oracle.  All reported
best-of-set results therefore use the stored decoded score after each
candidate's angle optimization.  The reproducible gate, source digests, and
cell-level values are given in Appendix~\ref{app:experiments}.

This mismatch also fixes the theoretical scope.  Section~\ref{sec:mechanism}
controls common-angle, label-averaged relaxed-energy response.  It does not
control independently optimized decoded maxima, and it is presented as a
mechanism consistent with the observations rather than a theorem for the
$0.94$ result.

\subsection{Optimizer and decoder checks}

A 30-graph local-iteration sweep reoptimizes the already selected
$\Xi_{123}$ winner at $p=1,3,5$.  Decoded ratios stabilize by 10, 40, and 80
L-BFGS-B iterations, respectively; the campaign budget is 80.  This validates
the local budget for those selected states, not global optimality over states
or angles.  A separate tight $p=5$ stress test improves relaxed energy on all
20 graphs but worsens decoded ratio on 13, consistent with the alignment
diagnostic.

An auxiliary expectation-level magic-decoder score preserves the ordering of
the three primary arms in the retained analysis, but it is an analytic
expectation rather than a finite-shot rounding protocol.  We therefore do not
claim decoder independence.  All main ratios use the deterministic Pauli-sign
decoder in Eq.~\eqref{eq:decoded-ratio}; device noise, measurement cost, and
finite-shot selection remain outside this statevector study.

\section{One-layer relaxed-energy mechanism}
\label{sec:mechanism}

This section studies the family-averaged one-layer relaxed-energy response, not
the decoded ratio.  It has two parts.  The deterministic result identifies an extensive
edge-syndrome code and proves dephasing of each nontrivial mixer branch.  The
Gaussian result controls their sum on the natural entangled-family mixer scale
and separates it from a fixed-angle product response.  Complete proofs appear
in Appendices~\ref{app:det-proofs} and \ref{app:gaussian-proofs}.

There is a further order-of-operations distinction.  The theoretical response
uses one common angle pair $(\gamma,\beta)$ for every label $v$ and averages
over $v$ before any supremum is taken.  The experimental family mean instead
optimizes an angle vector independently for every state and then averages its
decoded score.  In symbols, the theory can control expressions of the form
$\sup_{\gamma,\beta}|2^{-r}\sum_v f_{j,v}(\gamma,\beta)|$, not
$2^{-r}\sum_v\sup_{\gamma,\beta}f_{j,v}(\gamma,\beta)$.  Label cancellation
can make the former small while the latter remains large; the theorem is a
mechanism, not the theoretical counterpart of Eq.~\eqref{eq:family-mean}.

\begin{figure*}[t]
  \centering
  \includegraphics[width=\textwidth]{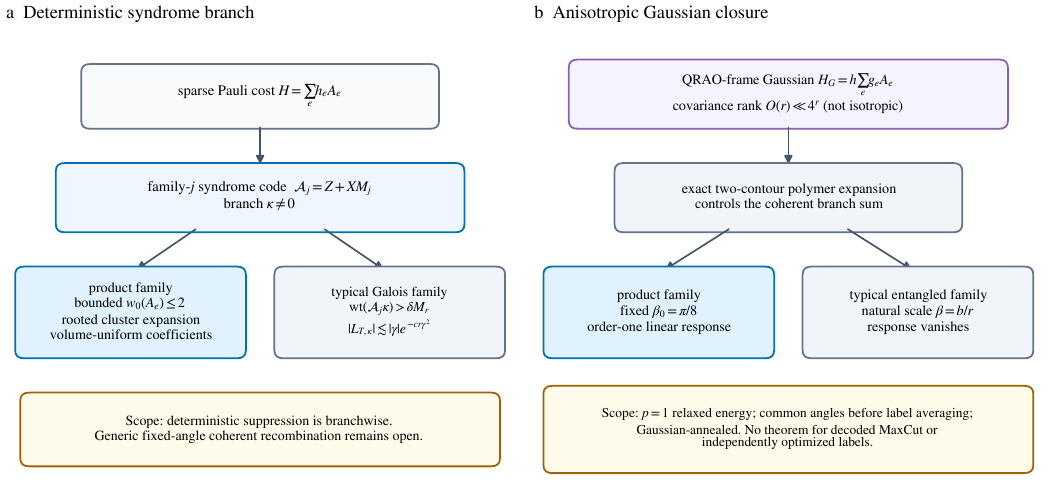}
  \caption{One-layer relaxed-energy mechanism.  \textbf{a}, in the
  deterministic expansion a typical entangled-family branch induces an
  extensive edge-sign quench and its individual rooted overlap dephases;
  product-family terms admit a rooted connected-cluster expansion with
  volume-uniform factorial-weighted bounds.  Coherent
  recombination of all deterministic branches at generic fixed mixer angle is
  open.  \textbf{b}, the anisotropic Gaussian QRAO surrogate closes that sum
  on the entangled scale $\beta=b/r$, while the product family retains a
  nonzero linear response at $\beta_0=\pi/8$.  The comparison is annealed,
  common-angle, label-averaged, $p=1$, and concerns relaxed energy rather than
  independently optimized decoded cuts.}
  \label{fig:theory-mechanism}
\end{figure*}

\subsection{Stabilizer filter and syndrome map}

Write the traceless local cost as
\begin{equation}
 H=\sum_{A\in\calH_r}h_AA,\qquad M_r=|\calH_r|=\Theta(r),
 \label{eq:local-pauli-cost}
\end{equation}
where identical Pauli strings have first been combined, so the Hermitian
Paulis $A=P(x_A,z_A)$ are pairwise distinct, and the physical term-overlap
graph has uniformly bounded degree.  Let
$\tau_r(O)=2^{-r}\Tr O$.

\begin{lemma}[Stabilizer filter]
\label{lem:stabilizer-filter}
For every operator $O=\sum_{x,z}\widehat O(x,z)P(x,z)$,
\begin{equation}
 \langle\psi_{j,v}|O|\psi_{j,v}\rangle
 =\sum_x \widehat O(x,M_jx)(-1)^{v\cdot x}.
 \label{eq:stabilizer-filter}
\end{equation}
Consequently,
\begin{equation}
 2^{-r}\sum_v\langle\psi_{j,v}|O|\psi_{j,v}\rangle=\tau_r(O).
 \label{eq:family-trace}
\end{equation}
\end{lemma}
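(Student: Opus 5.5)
The plan is to use the standard fact that a stabilizer state has a nonzero Pauli expectation only on its stabilizer group, where the expectation is $\pm1$. The only real content is then identifying that group for $\ket{\psi_{j,v}}$ together with its signs. Expanding $O=\sum_{x,z}\widehat O(x,z)P(x,z)$ and using linearity reduces Eq.~\eqref{eq:stabilizer-filter} to a single claim:
\[
\langle\psi_{j,v}|P(x,z)|\psi_{j,v}\rangle=\delta_{z,M_jx}\,(-1)^{v\cdot x}.
\]
Eq.~\eqref{eq:family-trace} then follows in one line. Averaging over $v$ gives $2^{-r}\sum_v(-1)^{v\cdot x}=\delta_{x,0}$, so only the term $\widehat O(0,0)$ survives. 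Since every nonidentity Pauli is traceless, $\widehat O(0,0)=\tau_r(O)$.

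To prove the claim I would first treat $v=0$ and then insert the labels. Let $U_j$ denote the Clifford in Eq.~\eqref{eq:mub-clifford} without the $Z^vH^{\otimes r}$ factor. Since $H^{\otimes r}\ket{0}^{\otimes r}=\ket{+}^{\otimes r}$ is stabilized by every $X^x$, the state $\ket{\psi_{j,0}}$ is stabilized by $U_jX^xU_j^\dagger$.

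Conjugation by the gates gives this operator directly. $S_q$ sends $X_q\mapsto Y_q$, which is exactly $P(e_q,e_q)$ in the Hermitian convention. $CZ_{qq'}$ sends $X_q\mapsto X_qZ_{q'}$. Composing these, $U_jX^xU_j^\dagger=\epsilon_j(x)P(x,M_jx)$ for some sign $\epsilon_j(x)$. This exhibits $L_j$ as the stabilizer group, and $r$ independent commuting elements fix the state uniquely up to phase.

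Any $P(x,z)$ with $z\neq M_jx$ is not $\pm$ an element of $L_j$. It therefore anticommutes with some element $g$ of $L_j$, and the expectation vanishes by the usual argument $\langle P\rangle=\langle gPg\rangle=-\langle P\rangle$. Finally, $Z^v$ commutes past $H^{\otimes r}$ as $X^v$ acting on $\ket{0}$. The cleaner way to see its effect is that $Z^vP(x,z)Z^v=(-1)^{v\cdot x}P(x,z)$, which produces the character $(-1)^{v\cdot x}$.

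The main obstacle is the sign $\epsilon_j(x)$: I must show it equals $+1$ for every $x$, not only for the basis vectors $e_q$. Products of Hermitian Paulis pick up phases $i^{(\cdot)}$, so signs could appear on sums $x+x'$. To settle this I would bypass generator bookkeeping and compute in the computational basis. The amplitudes are
\[
\psi_{j,0}(y)=2^{-r/2}i^{Q_j(y)},\qquad Q_j(y)=\sum_q(M_j)_{qq}y_q+2\sum_{q<q'}(M_j)_{qq'}y_qy_{q'}\pmod 4,
\]
with $y$ read as integers in $\{0,1\}$. Writing $P(x,z)=i^{x\cdot z}X^xZ^z$, the expectation becomes
\[
2^{-r}\,i^{x\cdot z}\sum_y i^{Q_j(y)-Q_j(y+x)}(-1)^{z\cdot y}.
\]
Expanding $Q_j(y+x)-Q_j(y)$ mod 4 should give $Q_j(x)+2\,(M_jx)\cdot y$. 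Hence the $y$-sum is $2^r\delta_{z,M_jx}\,i^{-Q_j(x)}$.

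It then remains to check that $i^{x\cdot M_jx}\,i^{-Q_j(x)}=1$. This holds because $x\cdot M_jx$ and $Q_j(x)$ agree mod 4 as integer quadratic forms: both equal $\sum_q(M_j)_{qq}x_q+2\sum_{q<q'}(M_j)_{qq'}x_qx_{q'}$. The care here is in the mod-4 lifting of $y+x$ over $\mathbb F_2$, using $(y_q\oplus x_q)=y_q+x_q-2y_qx_q$. If the paper's Hermitian convention differed from $i^{x\cdot z}X^xZ^z$, I would absorb the resulting fixed sign into the definition of $P$, consistent with the ``up to a global phase'' remark.
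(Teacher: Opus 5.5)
\textbf{The sign step fails.} Your reduction to $\langle\psi_{j,v}|P(x,z)|\psi_{j,v}\rangle=\delta_{z,M_jx}(-1)^{v\cdot x}$, your use of $Z^v$, and your averaging step follow the paper's argument. The problem is the step you single out as the main obstacle: you resolve it with a false identity.

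Your amplitude computation is correct and gives $\langle\psi_{j,0}|P(x,z)|\psi_{j,0}\rangle=\delta_{z,M_jx}\,i^{x\cdot z-Q_j(x)}$. But in $P(x,z)=i^{x\cdot z}X^xZ^z$ the integer $x\cdot z$ is formed with $z=M_jx$ already reduced mod $2$. Write $n_q=\sum_{q'}(M_j)_{qq'}x_{q'}$ over $\mathbb Z$. Then $x\cdot z=\sum_q x_q\,(n_q\bmod 2)$, whereas $Q_j(x)=\sum_q x_q n_q$. The two differ by $2\sum_q x_q\lfloor n_q/2\rfloor$, which need not vanish mod $4$.

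Concretely, take $r=2$ and $M_j=\bigl(\begin{smallmatrix}1&1\\1&0\end{smallmatrix}\bigr)$, which lies in the spread $\{0,I,M_j,M_j+I\}$, and take $x=(1,1)$. Then $M_jx=(0,1)$, $P(x,M_jx)=iX_1X_2Z_2=X_1Y_2$, $x\cdot z=1$ and $Q_j(x)=3$, so $\langle X_1Y_2\rangle=-1$. You can check this directly: the stabilizers $Y_1Z_2$ and $Z_1X_2$ of $CZ_{12}S_1\ket{+}^{\otimes 2}$ multiply to $-X_1Y_2$. So with the standard tensor-product representatives ($iXZ=Y$ on each qubit), the first identity of the lemma is false. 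The true formula carries the extra sign $\epsilon_j(x)=i^{x\cdot z-Q_j(x)}$. This sign is $+1$ on the basis vectors $e_q$ but not on their sums, which is exactly the worry you raised.

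\textbf{Why your fallback does not repair it.} $\epsilon_j$ is not a global phase of the state, which could not change any expectation value. Nor is it a character of $x$: here $\epsilon_j(e_1)=\epsilon_j(e_2)=1$ but $\epsilon_j(e_1+e_2)=-1$. So it cannot be removed by relabeling $v$ either.

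\textbf{What is missing is the paper's convention.} The proof relies on Eq.~\eqref{eq:spread-eigenvalue}: the Hermitian representative of each class $(x,M_jx)$ is chosen so that $x\mapsto P(x,M_jx)$ is a homomorphism onto the stabilizer group of $\ket{\psi_{j,0}}$. In your notation, $P(x,M_jx):=U_jX^xU_j^\dagger=\epsilon_j(x)\,i^{x\cdot z}X^xZ^z$, and $\widehat O$ is expanded in these representatives. This choice is consistent across families, because for $x\neq 0$ the spread property puts the class $(x,M_jx)$ in exactly one $L_j$.

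With that convention the sign is $+1$ by definition, not by computation. Then $\ket{\psi_{j,v}}=Z^v\ket{\psi_{j,0}}$ (all gates in $U_j$ are diagonal) supplies $(-1)^{v\cdot x}$, and your argument becomes the paper's proof. The second identity, Eq.~\eqref{eq:family-trace}, holds under either convention, since only $x=0$ survives the $v$-average and $\epsilon_j(0)=1$.
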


For a Pauli $A$, define its family-$j$ syndrome
\begin{equation}
\begin{aligned}
 a_j(A)&=z_A+M_jx_A\in V,\\
 w_j(A)&=|\supp a_j(A)|.
\end{aligned}
\label{eq:syndrome}
\end{equation}
Its support is exactly the set of matched mixer generators that anticommute
with $A$.  A basic spread property will repeatedly randomize this syndrome.

\begin{lemma}[Exact Galois randomization]
\label{lem:galois-randomization}
For every nonzero $\kappa\in V$, the map
$j\mapsto M_j^{\mathsf T}\kappa$ is a bijection on $V$.
\end{lemma}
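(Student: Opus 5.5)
The plan is to use only the defining spread property of the matrices $M_j$ from the stabilizer MUB catalog subsection: $M_j-M_{j'}$ is nonsingular over $\F_2$ whenever $j\ne j'$. No explicit Galois-field formula for $M_j$ is needed. Since $V=\F_2^r$ is finite, a map $V\to V$ is a bijection as soon as it is injective. The whole argument therefore reduces to showing injectivity of $j\mapsto M_j^{\mathsf T}\kappa$ for fixed nonzero $\kappa$.

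\begin{enumerate}
\item \textbf{Reduce to symmetric matrices.} The spread matrices were declared symmetric, so $M_j^{\mathsf T}\kappa=M_j\kappa$. Alternatively, transposition preserves nonsingularity, so $(M_j-M_{j'})^{\mathsf T}=M_j^{\mathsf T}-M_{j'}^{\mathsf T}$ is nonsingular whenever $j\ne j'$. Either route lets the argument go through without relying on symmetry.

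\item \textbf{Prove injectivity.} Suppose $M_j^{\mathsf T}\kappa=M_{j'}^{\mathsf T}\kappa$ with $j\ne j'$. Then $(M_j-M_{j'})^{\mathsf T}\kappa=0$ with $\kappa\ne0$. This exhibits a nontrivial kernel vector of a matrix that the spread property says is nonsingular, which is a contradiction. Hence $j=j'$.

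\item \textbf{Conclude surjectivity.} The map is an injection from the finite set $V$ to itself, so $|{\rm image}|=|V|=2^r$ and the map is onto.
\end{enumerate}

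It is worth noting why the index set and target match. The spread is indexed by $j\in V$, giving exactly $2^r$ noncomputational families, and this equality of cardinalities is what upgrades injectivity to bijectivity.

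There is no real obstacle here. The only point to handle carefully is that the statement is phrased with $M_j^{\mathsf T}$ rather than $M_j$, and step 1 addresses this. One should also make sure the spread property is applied to the pairwise differences $M_j-M_{j'}$, and not only to each $M_j$ individually (note $M_0=0$ for the product family).

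For later use, the interpretation is the one the syndrome analysis will rely on. For a fixed nonzero Pauli-mask direction $\kappa$, as the family label $j$ ranges uniformly over $V$, the vector $M_j^{\mathsf T}\kappa$ ranges uniformly over $V$. This is the exact randomization of syndrome inner products $\kappa\cdot a_j(A)=\kappa\cdot z_A+(M_j^{\mathsf T}\kappa)\cdot x_A$ invoked afterward.
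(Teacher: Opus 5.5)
Your proposal is correct and follows essentially the same route as the paper: assume $M_j^{\mathsf T}\kappa=M_{j'}^{\mathsf T}\kappa$, obtain a nonzero kernel vector of $M_j-M_{j'}$ contradicting the spread property, and upgrade injectivity to bijectivity on the finite set $V$. The one difference is that the paper invokes symmetry of the $M_j$, while your remark that $(M_j-M_{j'})^{\mathsf T}$ is nonsingular whenever $M_j-M_{j'}$ is shows that symmetry is not needed for this lemma.
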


\subsection{Exact branch representation}

For fixed $(j,v)$ and $\kappa\subseteq[r]$, identify $\kappa$ with its indicator vector in $V$ and put
$G_\kappa=\prod_{q\in\kappa}g_{j,v,q}$ and
$H^{(\kappa)}=G_\kappa HG_\kappa$.  For a root term $T$, let
$S_T=\supp a_j(T)$ and define
\begin{equation}
 L_{T,\kappa}(\gamma)=\tau_r\!\left(
 e^{\ii\gamma H^{(\kappa)}}T e^{-\ii\gamma H}\right).
 \label{eq:branch-amplitude}
\end{equation}
Let
\begin{equation}
 U_{j,v}(\gamma,\beta)
 :=e^{-\ii\beta B_{j,v}}e^{-\ii\gamma H}.
\end{equation}
Define the normalized, common-angle, family-averaged energy and its change by
\begin{align}
 f_{j,r}(\gamma,\beta)
 &:=\frac{1}{M_r2^r}\sum_{v\in V}
 \langle\psi_{j,v}|U_{j,v}^{\dagger}HU_{j,v}|\psi_{j,v}\rangle,
 \nonumber\\[-2pt]
 \Delta f_{j,r}(\gamma,\beta)
 &:=f_{j,r}(\gamma,\beta)-f_{j,r}(0,\beta).
 \label{eq:family-response}
\end{align}

\begin{theorem}[Exact matched-mixer branches]
\label{thm:exact-branches}
With $c_\beta=\cos(2\beta)$ and $s_\beta=\sin(2\beta)$,
\begin{equation}
\begin{aligned}
 \Delta f_{j,r}
 &=\frac1{M_r}\sum_T h_T
 \sum_{\varnothing\ne\kappa\subseteq S_T}\\
 &\quad\times
 c_\beta^{w_j(T)-|\kappa|}(-\ii s_\beta)^{|\kappa|}
 L_{T,\kappa}(\gamma).
\end{aligned}
\label{eq:exact-branch-formula}
\end{equation}
The empty branch vanishes identically.  In particular,
$\Delta f_{j,r}(\gamma,k\pi/2)=0$ for every integer $k$.
\end{theorem}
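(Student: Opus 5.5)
We work in the Heisenberg picture, expand each cost term over the anticommuting mixer generators, and then use the stabilizer property to remove the label dependence before averaging over $v$.

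\textbf{Heisenberg form.} Write $U_{j,v}^\dagger H U_{j,v}=e^{\ii\gamma H}\,e^{\ii\beta B_{j,v}}He^{-\ii\beta B_{j,v}}\,e^{-\ii\gamma H}$ and expand $H=\sum_T h_T T$. By Eq.~\eqref{eq:mixer-convention}, $e^{\ii\beta B_{j,v}}=\prod_q e^{-\ii\beta g_{j,v,q}}$, and the factors mutually commute. Each generator either commutes with $T$ or anticommutes with it. By the definition of the syndrome, the anticommuting ones are exactly those with $q\in S_T=\supp a_j(T)$. This set does not depend on $v$, because $v$ only changes the signs of the generators.

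\textbf{Mixer conjugation.} Commuting generators act trivially on $T$. For the anticommuting ones we apply Eq.~\eqref{eq:pauli-conjugation} one generator at a time. After one step the operator is a combination of $T$ and $g_qT$. A further generator $g_{q'}$ with $q'\ne q$ commutes with $g_q$, so it anticommutes with $g_qT$ exactly when it anticommutes with $T$. The same rule therefore applies to every term, and expanding the product gives
\[
e^{\ii\beta B_{j,v}}Te^{-\ii\beta B_{j,v}}=\sum_{\kappa\subseteq S_T}c_\beta^{\,w_j(T)-|\kappa|}(-\ii s_\beta)^{|\kappa|}\,G_\kappa T .
\]
Keeping the left-multiplication order straight in this step is the only delicate point. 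Any reordering of the commuting $g$'s is harmless.

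\textbf{Removing $G_\kappa$.} We evaluate $\langle\psi_{j,v}|e^{\ii\gamma H}G_\kappa Te^{-\ii\gamma H}|\psi_{j,v}\rangle$. Since $G_\kappa^2=I$, we have $e^{\ii\gamma H}G_\kappa=G_\kappa e^{\ii\gamma H^{(\kappa)}}$. Also $\langle\psi_{j,v}|G_\kappa=\langle\psi_{j,v}|$, because each $g_{j,v,q}$ stabilizes the state. The matrix element therefore equals $\langle\psi_{j,v}|O_{T,\kappa}|\psi_{j,v}\rangle$ with $O_{T,\kappa}=e^{\ii\gamma H^{(\kappa)}}Te^{-\ii\gamma H}$. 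The operator $H^{(\kappa)}=G_\kappa HG_\kappa$ is independent of the sign label $v$, since each Pauli term of $H$ is multiplied by $\pm1$ according to commutation and the $v$-signs of $G_\kappa$ cancel in the two-sided conjugation. Hence $O_{T,\kappa}$ depends on $v$ only through nothing at all.

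\textbf{Averaging over labels.} Because $O_{T,\kappa}$ does not depend on $v$, the family-trace identity Eq.~\eqref{eq:family-trace} of Lemma~\ref{lem:stabilizer-filter} gives $2^{-r}\sum_v\langle\psi_{j,v}|O_{T,\kappa}|\psi_{j,v}\rangle=\tau_r(O_{T,\kappa})=L_{T,\kappa}(\gamma)$. Dividing by $M_r$ and summing over $T$ yields $f_{j,r}$ as the sum over all $\kappa\subseteq S_T$.

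\textbf{Empty branch, $\gamma=0$, and $\beta=k\pi/2$.} For $\kappa=\varnothing$, cyclicity of the trace gives $L_{T,\varnothing}=\tau_r(T)=0$, because every $T$ is a traceless nonidentity Pauli. At $\gamma=0$ every branch equals $\tau_r(T)=0$, so $f_{j,r}(0,\beta)=0$. Hence $\Delta f_{j,r}=f_{j,r}$, which is Eq.~\eqref{eq:exact-branch-formula} with only nonempty $\kappa$. Finally, at $\beta=k\pi/2$ we have $s_\beta=0$, so every nonempty branch carries a factor $0^{|\kappa|}=0$, and the response vanishes.
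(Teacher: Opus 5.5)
Your proposal is correct and follows the paper's own proof essentially step for step: the single-generator conjugation is expanded into branches $G_\kappa T$, then $e^{\ii\gamma H}G_\kappa=G_\kappa e^{\ii\gamma H^{(\kappa)}}$ is used, $G_\kappa$ is removed as a $+1$ stabilizer, and the family-trace identity yields $L_{T,\kappa}$. Where the paper is silent, you make two needed points explicit: $S_T$ and $H^{(\kappa)}$ do not depend on $v$, so Lemma~\ref{lem:stabilizer-filter} applies, and $f_{j,r}(0,\beta)=0$, so $\Delta f_{j,r}=f_{j,r}$.
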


Equation~\eqref{eq:exact-branch-formula} is the central bookkeeping identity.
It makes two different tasks visible.  One may suppress each
$L_{T,\kappa}$, or one may control the coherent radial sum over $\kappa$.
The former does not automatically imply the latter.

\subsection{An edge-syndrome code for typical Galois families}

Index the Hamiltonian terms by $A=1,\ldots,M_r$, and let $X,Z$ be the matrices
with rows $x_A^{\mathsf T},z_A^{\mathsf T}$.  Define
\begin{equation}
\begin{aligned}
 \mathcal A_j&=Z+XM_j, & \Phi&=[Z\ \ X],\\
 d_r&=2r-\rank\Phi.
\end{aligned}
\label{eq:syndrome-code}
\end{equation}
Then $(\mathcal A_j\kappa)_A=\kappa\cdot a_j(A)$ is the indicator that branch
$\kappa$ flips Hamiltonian term $A$.  Let
\begin{equation}
 H_2(x)=-x\log_2x-(1-x)\log_2(1-x).
 \label{eq:binary-entropy}
\end{equation}
For $0<x<1$, this is the binary entropy.

\begin{theorem}[Positive distance for typical families]
\label{thm:code-distance}
Suppose $d_r\le d_0$ and choose $0<\delta<1/2$ such that
\begin{equation}
 M_rH_2(\delta)+d_0\le(1-\eta)r
 \label{eq:entropy-condition}
\end{equation}
for some $\eta>0$.  All but at most
$2^{(1-\eta)r+o(r)}$ Galois families satisfy
\begin{equation}
 \wt(\mathcal A_j\kappa)>\delta M_r
 \quad\text{for every }\kappa\ne0.
 \label{eq:positive-distance}
\end{equation}
\end{theorem}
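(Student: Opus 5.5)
Write $\mathcal A_j\kappa=Z\kappa+XM_j\kappa$.  The natural approach is a first-moment (union-bound) count over the family index $j\in V$: count the pairs $(j,\kappa)$ with $\kappa\neq0$ such that $\mathcal A_j\kappa$ is a low-weight word, and show the count is at most $2^{(1-\eta)r+o(r)}$.  A family violating Eq.~\eqref{eq:positive-distance} contributes at least one such pair, so the count bounds the number of bad families.

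The plan is to fix a target word $y\in\F_2^{M_r}$ with $\wt(y)\le\delta M_r$ and count the pairs $(j,\kappa)$ with $\kappa\neq0$ and
\[
Z\kappa+XM_j\kappa=y .
\]
Put $\lambda=M_j\kappa$.  Symmetry of $M_j$ gives $M_j\kappa=M_j^{\mathsf T}\kappa$.  By Lemma~\ref{lem:galois-randomization}, for each fixed $\kappa\neq0$ the map $j\mapsto\lambda$ is a bijection on $V$.  Hence pairs $(j,\kappa)$ with $\kappa\ne0$ correspond injectively to pairs $(\kappa,\lambda)\in(V\setminus\{0\})\times V$.  The constraint then becomes the linear equation
\[
\Phi\begin{pmatrix}\kappa\\ \lambda\end{pmatrix}=y
\]
in the $2r$ unknowns $(\kappa,\lambda)$.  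This system has either no solution or exactly $2^{2r-\rank\Phi}=2^{d_r}\le 2^{d_0}$ solutions.

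Summing over all low-weight targets $y$, the number of bad pairs is at most
\[
2^{d_0}\sum_{i\le\delta M_r}\binom{M_r}{i}\le 2^{d_0+M_rH_2(\delta)},
\]
using the standard entropy bound for $\delta<1/2$.  By Eq.~\eqref{eq:entropy-condition} this is at most $2^{(1-\eta)r}$.  Every bad $j$ has some $\kappa\ne0$ with $\wt(\mathcal A_j\kappa)\le\delta M_r$, so the number of bad families is bounded by the number of bad pairs.  This gives the claimed $2^{(1-\eta)r+o(r)}$ bound, with the $o(r)$ absorbing any polynomial factor if a cruder binomial estimate is used.  Since there are $2^r$ families in total, the statement follows.

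The main point to verify is the injectivity in the change of variables.  The bijection of Lemma~\ref{lem:galois-randomization} is applied for each fixed $\kappa$, so distinct $j$ give distinct $\lambda$ and the map $(j,\kappa)\mapsto(\kappa,\lambda)$ is injective.  We never need surjectivity onto solutions; we only over-count them.  We must also check that the syndrome convention $a_j(A)=z_A+M_jx_A$ agrees with the matrix form $\mathcal A_j=Z+XM_j$.  Indeed, $\kappa\cdot a_j(A)=z_A\cdot\kappa+x_A^{\mathsf T}M_j^{\mathsf T}\kappa$, and symmetry of $M_j$ turns this into row $A$ of $Z\kappa+XM_j\kappa$.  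The remaining estimates are routine.
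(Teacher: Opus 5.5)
Your proposal is correct and is essentially the paper's proof: for each fixed $\kappa\ne0$ you use Lemma~\ref{lem:galois-randomization} to reparametrize $j$ by $\lambda=M_j\kappa$, rewrite $\mathcal A_j\kappa=\Phi(\kappa,\lambda)$, bound every nonempty fiber of $\Phi$ by $2^{d_r}\le2^{d_0}$, and sum over the Hamming ball of radius $\delta M_r$ using the entropy estimate. Your explicit checks of the injectivity of $(j,\kappa)\mapsto(\kappa,\lambda)$ and of the agreement between $a_j(A)$ and the rows of $\mathcal A_j$ spell out what the paper leaves implicit.
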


The abstract rank hypothesis is automatically bounded for the QRAC packing
used in the experiments.

\begin{proposition}[Rank and locality of the cubic-graph packing]
\label{prop:actual-packing}
For a connected cubic graph encoded by exact coloring and color-class chunks
of size at most three, $d_r\le5$, $M_r=\Theta(r)$, and the term-overlap degree
is at most 16.  Moreover, asymptotically at least a $2/3$ fraction of edge
terms have nonzero $X$ label, and at least a $2/3$ fraction have nonzero
product-family syndrome.
\end{proposition}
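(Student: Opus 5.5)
We prove Proposition~\ref{prop:actual-packing} by reading off the Pauli data of every edge term from the packing. Each edge $\{i,k\}$ contributes $3P_iP_k$ with $P_i$ on qubit $q(i)$ and $P_k$ on qubit $q(k)\neq q(i)$. Within a chunk the axes are distinct, so distinct edges give distinct Pauli strings up to sign. To see this, note that an edge term on qubit pair $(q,q')$ fixes one axis on each qubit, and each axis on a qubit identifies a unique variable. Hence $M_r=|E|=3n/2$.

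Chunking the $\chi\le4$ color classes (Brooks) into pieces of size at most three gives $n/3\le r\le n/3+\chi$. Therefore $r=\Theta(n)$ and $M_r=\Theta(r)$.

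\emph{Locality.} A term on qubits $(q,q')$ overlaps every term touching $q$ or $q'$. Each qubit carries at most three variables of degree three, so at most nine edge terms touch a qubit. The number of other terms sharing a qubit is therefore at most $2\cdot 9-2=16$, since the term itself is counted on both qubits.

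\emph{Rank bound.} We show $\rank\Phi\ge2r-5$ by exhibiting enough independent rows. The row of a term is $(z_A,x_A)\in\F_2^{2r}$. It is supported on the two qubits $q,q'$, with local symplectic vector given by its axes. I would fix qubit $q$ hosting variables with axes $X,Y,Z$. The local vectors $X=(0,1)$, $Y=(1,1)$, $Z=(1,0)$ span $\F_2^2$ once any two distinct axes appear. The plan is a spanning-tree/elimination argument on the \emph{qubit quotient graph}: nodes are qubits, and an edge joins two qubits whenever some graph edge links them. Because $G$ is connected, this graph is connected.

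\begin{enumerate}[leftmargin=*,itemsep=1pt]
\item Root the elimination at one qubit. Along a spanning tree, show that each new qubit adds two independent local directions, using two edges with different local axes on the new qubit. The new qubit needs two distinct axes incident to already-processed qubits. I expect to secure this via the degree-three structure of each chunk variable and cyclic axis ordering.
\item Qubits whose chunk contains a single variable, or chunks where only one axis reaches the tree, each contribute a deficiency. I would bound the number of such qubits by a constant: at most one undersized chunk per color class gives at most four qubits. Together with the root this gives $d_r\le5$.
\end{enumerate}

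This counting is the main obstacle. I would try first to show that every full chunk connects with at least two distinct axes to the rest of the graph. Cubic connectivity means a chunk's three variables have nine incident edges, all leaving the qubit, and the cyclic ordering ensures that not all of them reach qubits already eliminated through a single axis. If this local argument fails in degenerate cases, the fallback is to carry the defect as a small explicit constant from the leftover chunks.

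\emph{Fraction statements.} An edge term has $x_A=0$ iff both endpoints carry axis $Z$. Within chunks the axes are cyclically ordered, so at most one variable per chunk has axis $Z$. Thus at most $\lceil n/3\rceil+O(1)$ variables carry $Z$, with the $O(1)$ from undersized chunks. These variables meet at most $3(n/3+O(1))=|E|\cdot\tfrac23+O(1)$ edges. A $ZZ$ edge needs two such endpoints, so the $ZZ$ edges number at most half of those incidences, i.e.\ at most $|E|/3+O(1)$. Hence at least a $2/3$ fraction of the edge terms have nonzero $X$ label asymptotically.

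For the product family $M_0=0$, so $a_0(A)=z_A$. This vanishes iff both endpoints carry axis $X$, and the same counting with $X$ in place of $Z$ gives at least $2/3$ nonzero syndrome.
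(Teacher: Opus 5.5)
Your arguments for distinct edge Paulis, $M_r=3n/2=\Theta(r)$, overlap degree $2\cdot9-2=16$, and both $2/3$ densities are correct and match the paper. For the densities, you use that each chunk has at most one $Z$ vertex (respectively $X$ vertex), so there are at most $3r/2$ edges of type $ZZ$ (respectively $XX$).

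The gap is the rank bound $d_r\le5$. A spanning-tree elimination certifies rank only in one way. When a qubit $q'$ is added, it counts the rank of the projection, onto the coordinates of $q'$, of the rows of edges joining $q'$ to already processed qubits. If all those edges leave $q'$ from a single variable, $q'$ adds only one direction. That loss is never repaired, because every later increment is measured on later qubits.

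The local fact you plan to prove, that a full chunk reaches the rest of the graph through all three axes, is true but is not what the elimination needs. It needs two distinct axes reaching \emph{already processed} qubits at the moment of attachment, and nothing bounds the number of steps where this fails. Concretely, suppose the qubit quotient graph has girth $g$ and no two qubits are joined by more than one graph edge. Then any $k<g$ qubits induce a forest carrying at most $k-1$ graph edges. So for every ordering, the first $g-1$ qubits contribute at most $g-2$, and the certified deficiency is at least $g$. The defect is therefore governed by the cycle structure of the quotient graph, not by the at most four remainder chunks, and your fallback constant does not exist. Your accounting of the root is also off: it receives no rows and costs $2$, not $1$.

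The missing idea is a global parity argument, which is the paper's route.
\begin{itemize}
\item Let $L:\F_2^n\to\F_2^{2r}$ send each vertex to the symplectic label of its assigned Pauli. Different qubits have disjoint supports and any two of the labels of $X,Y,Z$ are independent, so $\rank L=2r-n_1$. Here $n_1\le\chi\le4$ counts singleton chunks.
\item The vectors $e_i+e_k$ of a connected graph span the even-weight subspace $W$, which has codimension one in $\F_2^n$. The row space of $\Phi$ is exactly $L(W)$.
\item Restricting a linear map to a codimension-one subspace lowers its rank by at most one. Hence $\rank\Phi\ge2r-n_1-1$ and $d_r\le n_1+1\le5$, with no ordering or attachment condition required.
\end{itemize}
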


The constants are deliberately conservative.  The proof uses only connected
cubic incidence rank, the three distinct Pauli axes in a full chunk, and the
four-color bound; it does not depend on a random-graph model.

\subsection{Deterministic individual-branch dephasing}

Let
\begin{equation}
 Q_j(\kappa)=\sum_{A:(\mathcal A_j\kappa)_A=1}h_A^2.
\end{equation}
An exact two-contour connected expansion factors
$L_{T,\kappa}=Z_\kappa R_{T,\kappa}$, with
\begin{equation}
 Z_\kappa=\tau_r(e^{\ii\gamma H^{(\kappa)}}e^{-\ii\gamma H}).
\end{equation}

\begin{theorem}[Uniform deterministic branch dephasing]
\label{thm:det-dephasing}
Assume $|h_A|\le h_*$, bounded term-overlap degree, and
$Q_j(\kappa)\ge q_*r$.  There exist constants $c,C,\gamma_*>0$, independent
of $r,j,T,\kappa$, such that
\begin{equation}
 |L_{T,\kappa}(\gamma)|
 \le C|\gamma|e^{-cr\gamma^2},
 \qquad |\gamma|\le\gamma_*.
 \label{eq:det-dephasing}
\end{equation}
For equal-magnitude QRAO edge coefficients, Eq.~\eqref{eq:det-dephasing}
holds simultaneously for every nonempty branch of every family satisfying
Theorem~\ref{thm:code-distance}.
\end{theorem}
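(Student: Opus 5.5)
Since $G_\kappa$ is a Pauli string, $H^{(\kappa)}=\sum_A \epsilon_A h_AA$ with $\epsilon_A=(-1)^{(\mathcal A_j\kappa)_A}$. Thus $H^{(\kappa)}$ is $H$ with the coefficients on the flipped set $F_\kappa=\{A:(\mathcal A_j\kappa)_A=1\}$ reversed. I will write $H=H_0+H_F$, where $H_F$ collects the terms in $F_\kappa$, so that $H^{(\kappa)}=H_0-H_F$. Then
\[
e^{\ii\gamma H^{(\kappa)}}Te^{-\ii\gamma H}
\]
is a two-contour object: a forward evolution with $H$ and a backward evolution with the sign-quenched Hamiltonian. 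Its normalized trace is the Loschmidt-type amplitude for a quench of $|F_\kappa|$ local terms.

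\textbf{Step 1: factorization.} Expand both exponentials in the interaction picture along the two contours and organize the result as a polymer (connected-cluster) expansion on the term-overlap graph. Polymers are connected sets of terms, and the activity of a polymer is the $\tau_r$-trace of the corresponding ordered product. The trace $\tau_r$ factorizes over disconnected Pauli supports only up to commutation phases, so I would use the standard device for local Hamiltonians at small $|\gamma|$. Clusters not connected to the root $T$ exponentiate to give $Z_\kappa=\exp\bigl(\sum_{\text{clusters}}\phi\bigr)$. The rooted clusters give $R_{T,\kappa}$. Bounded degree and $|h_A|\le h_*$ make the Kotecký--Preiss criterion hold for $|\gamma|\le\gamma_*$, with activities bounded by $(C|\gamma|)^{|X|}$. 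This gives $|R_{T,\kappa}|\le C|\gamma|$: $R_{T,\kappa}$ vanishes at $\gamma=0$ because $\tau_r(T)=0$ for traceless $T$, and its first-order term is linear in $\gamma$.

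\textbf{Step 2: leading order of $\log Z_\kappa$.} Single-term clusters contribute $\log\tau_r(e^{\ii\gamma\epsilon_Ah_AA}e^{-\ii\gamma h_AA})$. This is $0$ for unflipped terms and $\log\cos(2\gamma h_A)$ for flipped ones. Pairs and larger clusters contribute $O(\gamma^3)$ or $O(\gamma^4)$ per polymer, summed over $O(r)$ polymers. Hence
\[
\operatorname{Re}\log Z_\kappa\le -2\gamma^2Q_j(\kappa)+C'r|\gamma|^3.
\]
Using $Q_j(\kappa)\ge q_*r$ and shrinking $\gamma_*$ so that $C'\gamma_*\le q_*$, I obtain $|Z_\kappa|\le e^{-q_*r\gamma^2}$. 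Multiplying by the bound on $R_{T,\kappa}$ gives Eq.~\eqref{eq:det-dephasing} with $c=q_*$.

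\textbf{Step 3: second clause.} For equal-magnitude QRAO edge coefficients $|h_A|=h$, we have $Q_j(\kappa)=h^2\wt(\mathcal A_j\kappa)$. By Theorem~\ref{thm:code-distance}, this exceeds $h^2\delta M_r\ge q_*r$ for every nonzero $\kappa$, using $M_r=\Theta(r)$ from Proposition~\ref{prop:actual-packing}. The constants depend only on $h_*$, the degree bound and $q_*$, so the estimate is uniform over all branches and families.

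\textbf{Main obstacle.} I expect the hard part to be making the two-contour cluster expansion rigorous under the noncommutative trace. Overlapping Pauli terms do not factorize, and the rooted part must be shown to be $O(|\gamma|)$ uniformly in $r$ rather than growing with the cluster count. My first attempt would be to Taylor-expand in $\gamma$ and bound connected ordered-moment cumulants by counting connected subgraphs of the bounded-degree overlap graph, which yields the factorial-weighted bounds. A secondary point is that the cubic correction must be volume-uniform; the same cluster count, linear in $r$, should provide this.
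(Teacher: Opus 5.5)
Your proposal is correct and follows essentially the paper's route. It factorizes $L_{T,\kappa}=Z_\kappa R_{T,\kappa}$ through a connected two-contour expansion, gets $|R_{T,\kappa}|\le C|\gamma|$ from $\tau_r(T)=0$, uses $\log Z_\kappa=-2Q_j(\kappa)\gamma^2+O(M_r|\gamma|^3)$ with $\gamma_*$ chosen to absorb the cubic term, and obtains the second clause from $Q_j(\kappa)=h^2\wt(\mathcal A_j\kappa)>h^2\delta M_r$. Your hard-core polymer/Koteck\'y--Preiss framing with exact singleton resummation $\log\cos(2\gamma h_A)$ (which the paper uses only in its Gaussian appendix) justifies the logarithm more carefully than the paper's direct connected-word count. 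Two small corrections: disjoint-support Pauli words commute exactly, so the only obstruction to factorization is hard-core exclusion, not commutation phases. And because deterministic activities are only $O(|\gamma|)$ per term, the uniform $O(M_r|\gamma|^3)$ remainder should come from a Cauchy estimate in $\gamma$, not from a per-polymer count. That estimate uses the fact that the non-singleton part of $\log Z_\kappa$ has vanishing Taylor coefficients through order two.
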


The theorem proves a deterministic mechanism: a typical entangled family
turns every nonempty branch into an extensive sign quench, whose two-contour
overlap decays.  It does \emph{not} prove that the coherent sum in
Eq.~\eqref{eq:exact-branch-formula} is exponentially small at every fixed
$\beta$.  Exact resonances are controlled, but a generic-bulk cancellation
theorem remains open.

\subsection{Why isotropic Gaussian disorder cannot select a family}

Let $G$ be a centered Gaussian operator.  If its covariance is proportional
to the identity on traceless Hermitian operator space, conjugation by the
Clifford relating any two MUB families preserves its law.  Consequently the
entire matched-QAOA response process has the same distribution in every
family.  An isotropic surrogate is therefore a no-go model for the empirical
product-family preference.

The QRAO surrogate retains the actual sparse Pauli frame.  Relabel the
$M_r$ cost terms by $e\in E_r$ and set
\begin{equation}
 H_G=h\sum_{e\in E_r}g_eA_e,\qquad g_e\stackrel{\mathrm{iid}}{\sim}N(0,1).
 \label{eq:gaussian-qrao}
\end{equation}
With the Hilbert--Schmidt inner product, its covariance operator is
$\Sigma_G:X\mapsto h^2\sum_e A_e\Tr(A_eX)$.  Its rank is at most $M_r=O(r)$ inside an operator
space of dimension $4^r-1$, so it is exponentially anisotropic.

\subsection{Exact Gaussian polymer expansion and separation}

Let $\mathcal F_{j,r}(\gamma,\beta)$ denote the normalized annealed
label-averaged energy change defined explicitly in
Eq.~\eqref{eq:gaussian-response-definition}.  For branch $\kappa$, set
$\sigma_e(\kappa)=(-1)^{(\mathcal A_j\kappa)_e}$ and
$H_\kappa=h\sum_e\sigma_e g_eA_e$.  Define the annealed vacuum
\begin{equation}
 Z_\kappa(\gamma)=\E_g\tau_r(
 e^{\ii\gamma H_\kappa}e^{-\ii\gamma H_G}).
 \label{eq:annealed-vacuum}
\end{equation}
Grouping two-contour words by their exact set of Gaussian labels yields a
hard-core gas of connected edge sets.  It is exact: no commutation
approximation is made inside a polymer, and Gaussian averaging retains all
Wick pairings.

\begin{theorem}[Gaussian polymer control]
\label{thm:polymer-control}
Let $D$ be one plus the maximum term-overlap degree and
$u_{\rm abs}(\gamma)=e^{2h^2|\gamma|^2}-1$.  A connected polymer $C$ has activity
\begin{equation}
 |W_\kappa(C;\gamma)|\le u_{\rm abs}(\gamma)^{|C|}.
\end{equation}
If $u_{\rm abs}(\gamma)\le(16e^2D^2)^{-1}$, the polymer and logarithmic expansions
converge uniformly.  If $N_-(\kappa)=|\{e:\sigma_e=-1\}|$, then on a smaller
uniform disk
\begin{equation}
 \log Z_\kappa=-2h^2N_-(\kappa)\gamma^2
 +O(M_r\gamma^4).
 \label{eq:singleton-resummation}
\end{equation}
The corresponding rooted branch is bounded by
\begin{equation}
 C|h\gamma|\exp[-2h^2N_-(\kappa)\gamma^2+C_4M_r\gamma^4].
 \label{eq:gaussian-root-bound}
\end{equation}
\end{theorem}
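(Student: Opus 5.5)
The plan is to expand both contours of $e^{\ii\gamma H_\kappa}e^{-\ii\gamma H_G}$ as Dyson/Taylor series in the Gaussian cost terms and organize everything by Gaussian labels. Write $e^{\ii\gamma H_\kappa}=\sum_m(\ii\gamma)^m H_\kappa^m/m!$ and similarly for the forward contour. Each word is then a product of Pauli strings $A_e$ times monomials in $g_e$. The normalized trace $\tau_r$ of a Pauli word is a sign times an indicator that the word multiplies to the identity up to phase. For each word we record its label multiset and the set of distinct labels it uses. Grouping words by the connected components of that label set in the term-overlap graph, we get factorization. Paulis from non-overlapping components commute, so their traces multiply. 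The Gaussians on disjoint label sets are independent, so their expectations multiply too. Hence $Z_\kappa=\sum_{\{C_1,\dots,C_k\}}\prod_i W_\kappa(C_i;\gamma)$ over families of pairwise compatible (non-overlapping, non-adjacent) connected edge sets. This is an exact hard-core polymer gas: no commutation is assumed inside a polymer, and Wick averaging is done in full. The activity $W_\kappa(C)$ is the sum over words whose label set is exactly $C$, with the subtraction that makes every label in $C$ appear.

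\emph{Activity bound.} We bound $|\tau_r(\cdot)|\le1$ and drop the phases. The two contours then give at most $\E\exp(2h|\gamma|\sum_{e\in C}|g_e|)$ restricted to words using every label. Per label, this is bounded by the ``at least one occurrence'' part of the even-moment series, which gives $\sum_{m\ge1}(2h^2\gamma^2)^m/m!$ after Wick counting. One pairs the combined $\pm$ contour occurrences, noting that odd total counts vanish. This yields $|W_\kappa(C)|\le u_{\rm abs}^{|C|}$. The main care is checking that the combinatorics of the two contours and the Gaussian moments $(2m-1)!!$ combine into exactly $e^{2h^2\gamma^2}-1$ per label. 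I expect this to be the main obstacle, and the first thing I would try is to majorize by the commuting scalar model $\E e^{\ii\gamma h g(\sigma-1)}$-type generating functions.

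\emph{Convergence.} Apply the Kotecký--Preiss criterion with weight $a(C)=|C|$. The number of connected edge sets of size $s$ containing a given edge in a graph of degree $<D$ is at most $(eD)^s$. Each polymer is incompatible with at most $D|C|$ roots. So $\sum_{C\not\sim C_0}|W|e^{|C|}\le D|C_0|\sum_s(eD)^s e^s u^s$, which is at most $|C_0|$ when $u\le(16e^2D^2)^{-1}$ (with room to spare). This gives uniform convergence of $\log Z_\kappa=\sum$ over clusters, with the cluster sum rooted at an edge bounded uniformly in $r$.

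\emph{Singleton resummation.} Compute the single-edge activity exactly. For $\sigma_e=+1$ the two contours cancel, $A_e^2=I$, and the activity is $0$. For $\sigma_e=-1$ the activity is $\E e^{-2\ii\gamma hg_eA_e}$ restricted to the trace. This gives $\E\cos(2\gamma hg_e)-1=e^{-2h^2\gamma^2}-1$, whose logarithm contributes $-2h^2\gamma^2$ exactly. Clusters of total size at least two contribute $O(u^2)=O(\gamma^4)$ per root, and there are $M_r$ roots, giving Eq.~\eqref{eq:singleton-resummation} on a smaller disk. For the rooted branch bound, insert $T$ as a marked root. The rooted numerator divided by $Z_\kappa$ is a ratio of polymer sums, controlled by the same cluster expansion localized near $T$. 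It is $O(|h\gamma|)$ because at least one label must attach to $T$ to make the trace nonzero. Multiplying by $|Z_\kappa|$ gives Eq.~\eqref{eq:gaussian-root-bound}.
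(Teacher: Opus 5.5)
Your proposal is correct in outline and is essentially the paper's proof: the same exact-support two-contour hard-core gas, per-label activity bound, Koteck\'y--Preiss count with marked edges within distance one of $C_0$, exact Gaussian singleton $1+W_\kappa(\{e\};\gamma)=e^{-2h^2\gamma^2}$ for $\sigma_e=-1$, and rooted factorization $X_{T,\kappa}=Z_\kappa R_{T,\kappa}$. Three details need fixing.
\begin{itemize}
\item The step you call the main obstacle is one line if you average before taking absolute values. Your $\E\exp(2h|\gamma|\sum_e|g_e|)$ loses Gaussian parity and gives per-label factors of order $|\gamma|$ rather than $\gamma^2$. Since $\E g_e^n\ge0$, each label instead contributes $\sum_{n\ge1}\frac{(2h|\gamma|)^n}{n!}\E g_e^n=\E e^{2h|\gamma|g_e}-1=u_{\rm abs}(\gamma)$.
\item The source must carry $g_t$, as in $e^{zg_tT}$. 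The forced odd Hamiltonian multiplicity of $t$ then supplies the root factor $2|h\gamma|e^{2h^2|\gamma|^2}$.
\item Your asserted $O(M_ru_{\rm abs}^2)$ tail is obtained in the paper by a Cauchy estimate in a rescaled activity variable.
\end{itemize}
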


For fixed nonzero $\kappa$, Lemma~\ref{lem:galois-randomization} makes
$M_j^{\mathsf T}\kappa$ exactly uniform as $j$ varies.  Bounded differences
then show that an extensive number of active edges flip simultaneously for
all branches of weight at most $\lfloor\log r\rfloor$, outside an
exponentially small family set.  The factorial tail of the mixer expansion
controls larger branches when $\beta=b/r$.

\begin{theorem}[Typical-entangled Gaussian dephasing]
\label{thm:gaussian-dephasing}
Assume $M_r=\Theta(r)$, bounded overlap, and
$\liminf_{r\to\infty}M_r^*/M_r\ge p_*>0$, where $M_r^*$ counts terms with
nonzero $X$ label.  Fix $B=[-b_+,b_+]\subset\R$.  There are constants
$C,C_4,c_G,\gamma_*>0$ and a deterministic sequence $\eta_r\to0$ such that,
for all but an $e^{-c_Gr+o(r)}$ fraction of nonzero Galois families,
\begin{equation}
 |\mathcal F_{j,r}(\gamma,b/r)|
 \le C|\gamma|e^{-M_r(c_2\gamma^2-C_4\gamma^4)}+\eta_r
 \label{eq:gaussian-family-bound}
\end{equation}
uniformly for $|\gamma|\le\gamma_*$ and $b\in B$, where
$c_2=p_*h^2/2$.  After reducing $\gamma_*$ if necessary,
\begin{equation}
 \sup_{|\gamma|\le\gamma_*\!,\ b\in B}
 |\mathcal F_{j,r}(\gamma,b/r)|\longrightarrow0.
 \label{eq:gaussian-vanish}
\end{equation}
\end{theorem}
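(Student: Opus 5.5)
The plan is to start from the annealed analogue of the exact branch formula of Theorem~\ref{thm:exact-branches}, with $H$ replaced by $H_G$ and expectation over $g$ taken inside. This gives
\[
 \mathcal F_{j,r}(\gamma,\beta)=\frac1{M_r}\sum_T h\sum_{\varnothing\ne\kappa\subseteq S_T}
 c_\beta^{w_j(T)-|\kappa|}(-\ii s_\beta)^{|\kappa|}\,\E_g\bigl[g_T\,L^G_{T,\kappa}(\gamma)\bigr].
\]
Here $L^G_{T,\kappa}$ is the Gaussian rooted branch, and by Theorem~\ref{thm:polymer-control} its annealed version is bounded by Eq.~\eqref{eq:gaussian-root-bound}. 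At $\beta=b/r$ we have $|s_\beta|\le 2|b|/r$ and $|c_\beta|\le1$. The weight of the branches with $|\kappa|=k$ is therefore at most $\binom{w_j(T)}{k}(2b_+/r)^k\le (2b_+)^k/k!$, because $w_j(T)\le r$. I will split the $\kappa$-sum at $k_r=\lfloor\log r\rfloor$. The tail $k>k_r$ is controlled using only $|L^G|\le C|h\gamma|$, which holds uniformly on the polymer disk. Summing $(2b_+)^k/k!$ over $k>k_r$ gives a deterministic quantity $\eta_r\to0$, uniform in $b\in B$, $T$ and $j$.

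For the light branches $1\le|\kappa|\le k_r$, the aim is $N_-(\kappa)\ge p_*M_r/4$ simultaneously for all such $\kappa$, for all but an exponentially small fraction of families $j$. Once this holds, Eq.~\eqref{eq:gaussian-root-bound} gives each light branch the bound $C|\gamma|\exp[-M_r(c_2\gamma^2-C_4\gamma^4)]$ with $c_2=p_*h^2/2$. The combinatorial prefactor $\sum_{k\ge1}(2b_+)^k/k!\le e^{2b_+}$ is absorbed into $C$, and the average over $T$ with the $1/M_r$ normalization is harmless. This yields Eq.~\eqref{eq:gaussian-family-bound}.

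The main obstacle is the typicality step, since it must hold uniformly over the roughly $r^{\log r}$ light branches. For fixed nonzero $\kappa$, term $e$ flips exactly when $\kappa\cdot(z_e+M_jx_e)=1$, i.e.\ when $\kappa\cdot z_e+(M_j^{\mathsf T}\kappa)\cdot x_e=1$. By Lemma~\ref{lem:galois-randomization}, $M_j^{\mathsf T}\kappa$ is exactly uniform on $V$ as $j$ ranges over the families. So every term with $x_e\ne0$ flips with probability exactly $1/2$, which gives $\E_j N_-(\kappa)\ge M_r^*/2\ge p_*M_r/2-o(r)$.

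For concentration I treat $N_-$ as a function of the uniform vector $u=M_j^{\mathsf T}\kappa$ with i.i.d.\ bits. Changing one coordinate $u_q$ alters only the terms touching qubit $q$, and there are boundedly many of these by the bounded-overlap hypothesis. McDiarmid's inequality then gives $\Pr_j[N_-(\kappa)<p_*M_r/4]\le e^{-c r}$. A union bound over at most $r^{\log r+1}=e^{o(r)}$ light branches leaves an exceptional family set of fraction $e^{-c_Gr+o(r)}$. At $\beta=k\pi/2$ everything vanishes exactly by Theorem~\ref{thm:exact-branches}, so no separate resonance handling is needed on the scale $\beta=b/r$.

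Finally, Eq.~\eqref{eq:gaussian-vanish} follows by choosing $\gamma_*$ small enough that $C_4\gamma^2\le c_2/2$. This gives the bound $C|\gamma|e^{-M_rc_2\gamma^2/2}$. Its supremum over $\gamma$ is $O(M_r^{-1/2})\to0$, since $M_r=\Theta(r)$, and adding $\eta_r\to0$ completes the argument.
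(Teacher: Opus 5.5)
Your proposal is correct and follows essentially the same route as the paper. Both arguments use the annealed branch expansion and the factorial mixer bound $\binom{w}{\ell}|\sin(2b/r)|^\ell\le(2b_+)^\ell/\ell!$, split at $\lfloor\log r\rfloor$, apply Galois randomization with McDiarmid and a union bound over the $e^{O((\log r)^2)}$ light branches, and finish with the same choice of $\gamma_*$ and the $O(M_r^{-1/2})$ supremum. The only deviation is the heavy-branch tail: the paper uses the unitarity bound $|X_{T,\kappa}|\le\E|g_t|$, whereas your $C|h\gamma|$ bound needs real $\gamma$, where $|Z_\kappa|\le1$ and the rooted Koteck\'y--Preiss estimate controls $R_{T,\kappa}$. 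It does not follow from Eq.~\eqref{eq:gaussian-root-bound} alone, because the factor $e^{C_4M_r\gamma^4}$ there is not uniform in $r$ when $N_-(\kappa)$ is small; any $r$-uniform $O(1)$ bound suffices for that tail anyway.
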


For the product family, define
\begin{equation}
 s_{0,r}(\beta)=\frac1{M_r}\sum_{e\in E_r}
 \sin\!\left(2\beta w_0(A_e)\right).
 \label{eq:product-slope-density}
\end{equation}
The sum is over \emph{all} edge terms.  In particular, $ZZ$ edges were
incorrectly omitted by an earlier ``nonzero $X$ label'' shorthand even though
they have product-family syndrome weight two.

\begin{theorem}[Product response and natural-scale separation]
\label{thm:product-separation}
Assume bounded overlap and a uniform product-syndrome bound
$w_0(A_e)\le w_*$.  Then
\begin{equation}
 \mathcal F_{0,r}(\gamma,\beta)
 =-2h^2s_{0,r}(\beta)\gamma+O(|\gamma|^3)
 \label{eq:product-linear-response}
\end{equation}
uniformly in $r$ on a fixed small disk.  For the cubic QRAC packing $w_*=2$
and $\beta_0=\pi/8$,
\begin{equation}
 \liminf_{r\to\infty}s_{0,r}(\beta_0)\ge\frac{\sqrt2}{3}>0.
 \label{eq:product-slope-liminf}
\end{equation}
Consequently, there exist a sufficiently small fixed $\gamma_0<0$ and
$c_0>0$ such that
$\mathcal F_{0,r}(\gamma_0,\beta_0)\ge c_0$ for all sufficiently large $r$.
By Theorem~\ref{thm:gaussian-dephasing}, the typical entangled-family response
on $\beta=b/r$ vanishes uniformly, so the response difference is bounded below
by a positive constant for all sufficiently large $r$.
\end{theorem}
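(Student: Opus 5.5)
Expand the annealed, label-averaged product-family response $\mathcal F_{0,r}$ to first order in $\gamma$ through the exact branch formula of Theorem~\ref{thm:exact-branches}, specialized to $j=0$ and to the Gaussian surrogate $H_G$ of Eq.~\eqref{eq:gaussian-qrao}. Since $\Delta f$ vanishes at $\gamma=0$, the linear coefficient comes from differentiating each rooted branch amplitude at $\gamma=0$.

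For a root term $T=hg_TA_T$ and a branch $\kappa\subseteq S_T$,
\begin{equation*}
\partial_\gamma L_{T,\kappa}\big|_{0}=\ii\,\tau_r\!\left((H^{(\kappa)}-H)T\right)\ \text{(up to ordering conventions)}.
\end{equation*}
Because $H^{(\kappa)}-H=-2h\sum_{e:\sigma_e=-1}g_eA_e$, and the $A_e$ are distinct Paulis, the trace picks out only $e=T$. It is nonzero exactly when $\kappa$ has odd overlap with $S_T$. Taking $\E_g$ gives a factor $h^2\E g_T^2=h^2$, so distinct edges decouple and the coefficient becomes a sum of single-edge radial sums:
\begin{equation*}
\sum_{\kappa\subseteq S_T,\ |\kappa|\ \mathrm{odd}} c_\beta^{w-|\kappa|}(-\ii s_\beta)^{|\kappa|}\cdot(\pm 2\ii).
\end{equation*}
The binomial identity $\sum_{k\ \mathrm{odd}}\binom{w}{k}c^{w-k}(-\ii s)^k=\tfrac12\bigl[(c-\ii s)^w-(c+\ii s)^w\bigr]=-\ii\sin(2\beta w)$ then yields, per edge, the contribution $-2h^2\sin(2\beta w_0(A_e))\gamma$. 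Here $w=w_0(A_e)$, the product syndrome is label-independent, and the average over $v$ only contributes through Lemma~\ref{lem:stabilizer-filter} as a trace. Summing over edges and normalizing by $M_r$ gives the linear term $-2h^2s_{0,r}(\beta)\gamma$. I will fix the overall sign against the convention of Eq.~\eqref{eq:pauli-conjugation}.

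\textbf{Uniform cubic remainder.} The remainder is handled with the polymer control of Theorem~\ref{thm:polymer-control}. Only the $2^{w_*}$ branches inside $S_T$ occur, and each rooted amplitude is analytic on a fixed disk with a bound uniform in $r$ from Eq.~\eqref{eq:gaussian-root-bound}. The key point is locality: after normalization the rooted connected expansion is $O(1)$ per root, not $O(M_r)$, so the $1/M_r$ average stays bounded. The $\gamma^2$ term vanishes by the $g\mapsto -g$ symmetry, since odd moments of $g$ vanish and the response is odd in $\gamma$. Cauchy estimates then give $O(|\gamma|^3)$ uniformly in $r$. I expect this step to be the main obstacle. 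It requires that the normalized rooted expansion have uniformly summable clusters of bounded-degree overlap, rather than using the crude $M_r\gamma^4$ vacuum bound.

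\textbf{Slope bound for the cubic packing.} With $\beta_0=\pi/8$ we have $\sin(2\beta_0 w)=\sin(\pi w/4)$, which equals $\sqrt2/2$ for $w=1$, $1$ for $w=2$, and $0$ for $w=0$. Hence $s_{0,r}(\beta_0)\ge(\sqrt2/2)\cdot M_r^{-1}\#\{e:w_0(A_e)\ne0\}$. By Proposition~\ref{prop:actual-packing}, the fraction of edges with $w_0(A_e)\ne 0$ is at least $2/3$ asymptotically, which gives $\liminf s_{0,r}(\beta_0)\ge\sqrt2/3$.

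\textbf{Separation.} Choosing $\gamma_0<0$ small enough that $2h^2(\sqrt2/3-\epsilon)|\gamma_0|>2C|\gamma_0|^3$ gives $\mathcal F_{0,r}(\gamma_0,\beta_0)\ge c_0>0$ for all sufficiently large $r$. Combining this with Eq.~\eqref{eq:gaussian-vanish} of Theorem~\ref{thm:gaussian-dephasing}, whose hypothesis $p_*\ge2/3$ also comes from Proposition~\ref{prop:actual-packing}, bounds the difference below by $c_0/2$ eventually.
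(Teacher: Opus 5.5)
Your linear-response computation (only odd branches survive, the odd binomial sum gives $-\ii\sin(2\beta w)$, and the factor is $h^2\E g_T^2$), the oddness argument, the $\sqrt2/3$ packing bound, and the separation step all agree with the paper. The gap is the uniform cubic remainder, which you flag as the main obstacle but do not close.

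Equation~\eqref{eq:gaussian-root-bound} is not uniform in $r$. Its factor $e^{C_4M_r|\gamma|^4}$ comes from the $O(M_r\gamma^4)$ remainder in $\log Z_\kappa$ and grows with $M_r$ at every fixed $\gamma\neq0$, so it cannot feed an $r$-uniform Cauchy estimate. Your locality remark controls only the rooted source logarithm $R_{T,\kappa}$, which is already $O(|h\gamma|)$. The branch amplitude, however, is $X_{T,\kappa}=Z_\kappa R_{T,\kappa}$, and the vacuum $\log Z_\kappa$ is in general a cluster sum over all $M_r$ edges. Nothing in your proposal shows that the product family escapes this extensivity.

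The missing idea is that for $j=0$ the vacuum is anchored to the root. Because $a_0(A)=z_A$, an edge is flipped by $\kappa\subseteq S_T=\supp z_T$ only if it acts on a qubit of $\kappa$. Such an edge therefore overlaps $T$, so $\mathcal D_\kappa=\{e:\sigma_e(\kappa)=-1\}$ has at most $D$ elements. Now take a nonempty label set $C$ that misses $\mathcal D_\kappa$. All its signs are $+1$, so its exact-support activity is the support-$C$ coefficient of $\E_g\tau_r(e^{\ii\gamma H_C}e^{-\ii\gamma H_C})=1$, which is zero. Hence every nonvanishing vacuum polymer meets $\mathcal D_\kappa$, and the rooted Koteck\'y--Preiss bound gives $|\log Z_\kappa|\le C_D|\mathcal D_\kappa|$ uniformly in $r$. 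Combined with the $2^{w_*}$ branch count and the $M_r^{-1}$ root average, this makes $\mathcal F_{0,r}$ uniformly bounded and analytic on a fixed disk. Cauchy plus oddness then gives your $O(|\gamma|^3)$.

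Alternatively, you can avoid the vacuum altogether. Since $B_{0,v}$ is one-local, $O_t=e^{\ii\beta B_{0,v}}A_te^{-\ii\beta B_{0,v}}$ stays on $\supp A_t$. In the Heisenberg picture the third $\gamma$-derivative of each root's contribution is then bounded by $\E_g[h|g_t|\,\|\operatorname{ad}_{H_G}^3(O_t)\|]$. This is a sum of $O(1)$ connected nested commutators, each with bounded Gaussian fourth moments. With $\partial_\gamma^2\mathcal F_{0,r}(0,\beta)=0$, Taylor's theorem yields the remainder uniformly in $r$ and $\beta$ for real $\gamma$, which is all the choice of $\gamma_0$ needs.
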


The scale qualification is essential.  Theorem~\ref{thm:product-separation}
compares fixed product angle $\beta_0$ with the natural entangled scale
$b/r$; it does not optimize an entangled family over all $\beta$.  It is also
annealed over Gaussian edge coefficients.  Its value is mechanistic: sparse
QRAO covariance plus Galois syndrome spreading suppress typical entangled
branches, while bounded-weight product syndromes preserve a coherent linear
response.

\section{Discussion}
\label{sec:discussion}

On the complete $n=18,20,22$ cohort, matched-budget \RandSEMIQAOA{} attains
mean decoded ratio $0.9421$ with a compressed depth-one best-of-$K$ search.
The structured cross-family MUB selector $\Xi_{123}$ is slightly higher at
$0.9439$, while the $r^2$-budget \RandSEMIQAOA{} result is $0.9319$.  No
problem-specific classical approximate cut initializes the quantum state, but
the experiment uses substantial hybrid resources: $K$ fixed state--mixer
candidates, an independent angle search for each candidate, and decoded-score
selection.  The exhaustive MUB campaign explains where the two random
product-family budgets should be concentrated.  Its product family leads 18
of 20 validated family-mean cells and every tested depth at $r=5,6,7$; the no-QAOA
control shows, within the product family, that the raw-to-optimized lift is not
inherited from favorable mean raw cuts.

\subsection{Different routes to shallow performance}

Depth alone is not a complete resource measure.  Three architectures illustrate
how work can be moved outside the number of QAOA alternations.

\emph{Parameter expressivity.}  Multi-angle QAOA assigns separate cost and
mixer angles to edges and vertices within one alternation
\cite{Herrman2022MultiAngle}.  It reports mean expected approximation ratio
$0.9257$ across all connected nonisomorphic eight-vertex graphs, while its
mean expected-energy ratios on triangle-free random 3-regular graphs are
$0.8123$ at $n=50$ and $0.8000$ at $n=100$.  Its resource is an enlarged,
instance-optimized parameter vector rather than a candidate ensemble.

\emph{Classical-solution-informed initialization.}  Regularized warm-started
QAOA constructs an instance-specific continuous classical solution that sets
the initial single-qubit states, followed by a shallow quantum layer and
rounding \cite{He2026RWSQAOA}.  Its Fig.~2 reports measured-bitstring approximation ratios near $0.97$
without local search at $p=1$ on five $n=96$ random 3-regular hardware
instances; the single-step local-search variant is reported separately.  That performance belongs to a deliberately different architecture:
the quantum state is supplied with a strong instance-specific classical warm
start.

\emph{\SEMIQAOA: compressed product-family ansatz search.}  Here a
$(3,1)$-QRAC compresses $n$ classical variables to $r$ qubits, and the ansatz
search is restricted to product-$X$ MUB states and their matched signed-$X$
mixers.  \RandSEMIQAOA{} offers $K$ labels drawn uniformly without replacement
from that family.  Every offered ansatz is independently optimized at the same
$p$, then selected by decoded score.  \SEMIQAOA{} should therefore be viewed
as a third resource tradeoff: rather than increasing alternation depth or
supplying an instance-specific classical solution, it searches a compressed
product-family state--mixer space.  The cost is multiple quantum
optimizations, and the approach is distinct rather than uniformly cheaper.

Other shallow quantum-assisted pipelines move still different work outside the
direct circuit output.  Quantum relax-and-round diagonalizes and rounds a
global correlation matrix \cite{DupontSundar2024QRR}; quantum-informed
surrogate sampling constructs a classical factor model and samples it from
low-order quantum correlators \cite{WyboFinzgar2026QISS}; and recursive QAOA or
QRAO repeatedly invokes shallow quantum information while eliminating
variables \cite{Finzgar2024QIRO,KondoEtAl2025}.  These methods are relevant
comparators but their expected energies, rounded solutions, surrogate samples,
and recursive final cuts cannot be placed on one numerical leaderboard with
our decoded best-of-$K$ statistic.

Prior work applying alternating operators directly to QRAO compared matched
$X$, $Y$, and $Z$ product state--mixer pairs and transferable fixed parameters
on random 3-regular MaxCut \cite{HeEtAl2025QRAO}.  The present work instead
enumerates a complete matched stabilizer-MUB catalog with statewise angle
optimization, uses that exhaustive result to identify a finite-budget pool, and
develops syndrome and Gaussian mechanisms for its family dependence.

\subsection{What the product geometry and decoder diagnostics say}

The structured cross-family MUB selector $\Xi_{123}$ is a useful
discovery-derived pool and has the slightly larger complete-cohort point
estimate, $0.9439$ versus $0.9421$ for matched-budget \RandSEMIQAOA{}.
Nevertheless, the random product-family method is the cleaner flagship: it
stays entirely inside the discovered product family, removes the fitted label
geometry in $\Xi_{123}$, and leads from $r=9$ in the one-seed depth-one
equal-$K$ comparison.  The $r^2$-budget \RandSEMIQAOA{} result, $0.9319$,
shows the same algorithmic mechanism under a smaller polynomial budget.
Together with the exact best-of-$K$ control, these comparisons suggest that
access to many responsive product labels matters more than the particular
tested index geometry.  With only one random subset per $r$ and schedule, the
appropriate next experiment is a frozen multi-seed selector study rather than
another post hoc region fit.

Relaxed energy and decoded ratio are also distinct resources and objectives.
Their median within-scan Spearman correlation is only $0.5313$, and the energy
argmax is a decoded argmax in $22.27\%$ of completeness-gated scans.  A
relaxed-energy response theorem can therefore explain structural sensitivity
without certifying the nonlinear sign decoder or the observed $0.94$
best-of-set ratio.

\subsection{Limitations and falsifiable next steps}

The empirical headline is a finite-size, noiseless statevector result.  Its
clean inferential population is the 2,400 complete $n=18,20,22$ graph scans,
conditional on fixed candidate rules and one random subset seed per register.
The $n\ge28$ points are runtime-censored toward smaller completed registers.
The observed matched-budget \RandSEMIQAOA{} mean is $0.9147$ on the 65
completed $n=34$ scans, while the $r^2$-budget \RandSEMIQAOA{} mean is
$0.9026$ and the structured MUB selector $\Xi_{123}$ is $0.8946$.  Thus both
tested random product-family schedules have point means above $0.9$ in that
analysis freeze.  Those tail points are descriptive rather than unbiased population
estimates.

The candidate resource is material.  The complete-cohort arms contain
$K=36$--232 states, while the executed deduplicated union plus baseline costs
113--415 independent optimizations per graph.  Each uses a large grid and
three local starts at $p=1$.  We do not compare wall-clock time, measurement
cost, or total classical work with standard QAOA, ma-QAOA, warm-started
methods, recursive solvers, or classical MaxCut algorithms.  The observed
ratio is neither a worst-case approximation guarantee nor evidence of quantum
advantage.

Within the product-family restriction that defines \SEMIQAOA, the
label-selection rule remains a separate design choice.  \RandSEMIQAOA{} uses
uniform random selection; future work can investigate problem-informed
alternatives to uniform random product-label selection.

All circuits are simulated noiselessly, and candidate selection uses the
stored deterministic Pauli-sign decoded score.  Device noise, finite-shot
ranking, and the state-preparation and measurement schedules require separate
study.  Repeated random candidate subsets are needed to estimate
selector-to-selector uncertainty.

The exhaustive result is also bounded by the historical $r=8$ finite-field
defect.  A corrected nonproduct rerun can test whether product-family
leadership persists at the next register.  Finally, the strongest mechanism
theorem is one-layer, annealed over Gaussian coefficients, and compares
typical entangled families at $\beta=b/r$ with the product family at fixed
$\beta_0$.  The deterministic proof controls individual branches but leaves
generic coherent recombination open.  None of these results proves decoded
ordering or $p>1$ performance.

\section{Conclusion}
\label{sec:conclusion}

\RandSEMIQAOA{} samples labels uniformly without replacement from the
product-$X$ stabilizer-MUB family and independently optimizes the corresponding
matched QRAO--QAOA circuits.  On 2,400 complete random connected 3-regular
MaxCut instances at $n=18,20,22$, the matched and $r^2$ candidate schedules
attain mean decoded best-of-set ratios $0.9421$ and $0.9319$, respectively, at
$p=1$; the structured cross-family comparator $\Xi_{123}$ attains $0.9439$.
Both random product-family schedules remain above $0.9$ on the completed
$n=34$ scans, although the $n\ge28$ tail is runtime-censored and only
descriptive.

The exhaustive MUB campaign explains the restriction: the product family leads
18 of 20 validated family-mean cells and every tested depth for $r=5,6,7$.  The
zero-angle control shows that high best-of-set performance is generated by
QAOA dynamics within this family rather than by unusually strong raw decoded
states.  The deterministic syndrome argument and anisotropic Gaussian polymer
expansion supply a one-layer relaxed-energy mechanism, not a theorem for the
nonlinear decoder.  The result is therefore a resource-explicit empirical
route to high shallow performance; it is not a quantum-advantage claim or a
comparison of total computational cost.

\newpage

\section*{Data and code availability}
\label{sec:data-code-availability}
The frozen configurations, selector seeds, optimization settings, and
data-processing procedures are described in
Appendix~\ref{app:experiments}.  The corresponding code, numerical records,
and derived tables are available from the authors and will be deposited in a
versioned public archive.

\printbibliography

\clearpage
\onecolumn
\appendix
\section{Experimental methods and additional checks}
\label{app:experiments}

\subsection{Graph ensembles and QRAC assignment}

Every main experiment uses simple connected 3-regular graphs.  For a requested
size $n$ and graph index $g$, the generator tries seeds
$s_0+100g,s_0+100g+1,\ldots$ until NetworkX returns a connected graph, then
relabels vertices in sorted order.  Discovery uses $s_0=5000$ and scale uses
$s_0=7000$; these are disjoint graph cohorts even where $n=18$ appears in both
campaigns.  A graph identifier records $n$, the index, and the accepted seed.

The $(3,1)$-QRAC assignment is deterministic.  Exact backtracking finds the
minimum proper coloring, checking color counts in ascending order and vertices
and colors in ascending label order.  Each color class is split into chunks of
at most three vertices.  If a chunk is assigned to relaxed qubit $q$, its
positions receive the cyclically shifted axes
$X,Y,Z$ in the order $(s+q)\bmod3$.  A color class is an independent set, so
no graph edge has both endpoints on one relaxed qubit.  The complete
assignment, including the variable-to-qubit and variable-to-axis maps, is
hashed into every output row.

For unweighted MaxCut,
\begin{equation}
 C(z)=\frac{|E|}{2}-\frac12\sum_{(i,j)\in E}z_iz_j,
 \qquad
 H=\frac{|E|}{2}I-\frac32\sum_{(i,j)\in E}P_iP_j.
 \label{eq:maxcut-qrao}
\end{equation}
The identity is retained during numerical optimization but omitted from the
response theory.  Exact denominators are obtained by exhaustive bitstring
evaluation for $n\le22$.  Above this threshold a deterministic single-worker
CP-SAT model is required to return status \texttt{OPTIMAL}; otherwise the run
fails rather than reporting a ratio with an uncertified denominator.

\subsection{MUB enumeration and invariants}

Implementation basis identifiers are zero for the computational basis and
$1,\ldots,d$ for the Galois noncomputational bases.  The state identifier is
\begin{equation}
 \texttt{mub\_state\_id}=d\,\texttt{basis\_id}+\texttt{vector\_id}.
\end{equation}
The exhaustive selector emits every
$\texttt{basis\_id}\in\{1,\ldots,d\}$ and
$\texttt{vector\_id}\in\{0,\ldots,d-1\}$, hence $d^2$ candidates.  It refuses
to materialize more than 70,000 states.  Every analytic stabilizer generator
is checked against the dense statevector for small registers, as are
commutation, binary independence, and the unique mixer ground state.

Family 1 is pinned by three independent identities: its field index is
$\texttt{basis\_id}-1=0$; its Clifford phase and controlled-$Z$ exponents
vanish; and its analytic stabilizers are signed one-local $X$ operators.
Streaming the local MaxCut discovery backup finds that family-1 vector 0 and
the baseline have identical stored energy, decoded ratio, and angle fields in
all 677 available graph-depth scans.  This is an exact implementation
invariant, not a statistical comparison.

The corrected finite-field recurrence reduces a sum of tap contributions
modulo two at every step.  At $r=8$, the old power table ended in
$203,177,354$ rather than the corrected $171,77,154$.  Direct old/new catalog
comparison shows that only noncomputational basis identifiers
$1,2,63,64,141,142,179,180$ remain unchanged; 248 of 256 change.  The repair
is commit \texttt{3ae8dd1}.  Because the full historical $r=8$ nonproduct
scan predates this commit, it is excluded as described in
Sec.~\ref{sec:gf2-defect}.

\subsection{Angle optimization}

At $p=1$, all candidates use the same grid
\begin{equation}
 \beta\in[0,\pi]\quad(33\text{ points}),\qquad
 \gamma\in[-\pi,\pi)\quad(65\text{ points}).
\end{equation}

This interval is a declared bounded optimization domain; no \(2\pi\) spectral period of the noncommuting QRAO Hamiltonian is assumed.

The three best grid points are refined with L-BFGS-B \cite{ByrdEtAl1995}, using
$\texttt{ftol}=10^{-12}$, $\texttt{gtol}=10^{-5}$,
$\texttt{maxfun}=15000$, and $\texttt{maxiter}=80$.  For $p>1$, the optimized
depth-$(p-1)$ vector is used as a parameter-continuation seed while a
$9\times13$ grid initializes the appended layer, followed by the same local
solver.  This continuation is unrelated to a problem-specific
classical-solution warm start: no approximate cut initializes the quantum
state.  Objective evaluation is noiseless double-precision statevector
simulation.

A seven-budget sweep on 30 already selected $\Xi_{123}$ states changes only
\texttt{maxiter}.  At $p=1$ all budgets from 10 through 1000 tie in decoded
ratio.  At $p=3$, budget 10 loses on 20 of 30 graphs relative to budget 80,
while budgets 40--1000 tie.  At $p=5$, budgets 10, 20, and 40 lose on 23, 13,
and one graphs, respectively, and budgets 80--1000 tie.  This validates the
chosen local budget for the tested winners but cannot rule out a different
winner under a different global state/angle search.

The campaign uses finite-difference gradients, so deeper optimizations are
sensitive to last-bit BLAS differences between machines.  Rows at $p\ge2$ are
not merged across compute environments.  The same-machine implementation was
checked for byte identity across the evaluation-cache optimization; this is a
software regression check, not a statistical result.

\begin{table*}[t]
  \centering
  \caption{Exhaustive decoded family means.  Every family contributes all
  $2^r$ labels on each graph.  The margin is a graph-paired
  product-minus-strongest-nonproduct difference; the interval is the
  descriptive graph-bootstrap interval for that selected comparator.}
  \label{tab:family-ranks}
  \small
\begin{tabular}{rrrrrrrr}
\toprule
$r$ & $p$ & graphs & product mean & rank & best nonproduct & margin & paired 95\% CI \\
\midrule
4 & 1 & 58 & 0.8024 & 1 & basis 5: 0.7791 & +0.0233 & [+0.0052, +0.0416] \\
5 & 1 & 136 & 0.7883 & 1 & basis 10: 0.7481 & +0.0401 & [+0.0307, +0.0497] \\
6 & 1 & 218 & 0.7842 & 1 & basis 2: 0.7401 & +0.0441 & [+0.0381, +0.0502] \\
7 & 1 & 56 & 0.7729 & 1 & basis 3: 0.7018 & +0.0711 & [+0.0602, +0.0820] \\
4 & 2 & 58 & 0.8164 & 1 & basis 5: 0.8010 & +0.0155 & [-0.0047, +0.0356] \\
5 & 2 & 136 & 0.8179 & 1 & basis 10: 0.7830 & +0.0349 & [+0.0267, +0.0431] \\
6 & 2 & 218 & 0.8003 & 1 & basis 2: 0.7493 & +0.0511 & [+0.0460, +0.0566] \\
7 & 2 & 56 & 0.7983 & 1 & basis 66: 0.7142 & +0.0841 & [+0.0767, +0.0912] \\
4 & 3 & 58 & 0.8359 & 1 & basis 5: 0.8225 & +0.0134 & [-0.0030, +0.0301] \\
5 & 3 & 136 & 0.8367 & 1 & basis 10: 0.7972 & +0.0395 & [+0.0322, +0.0465] \\
6 & 3 & 218 & 0.8071 & 1 & basis 2: 0.7570 & +0.0501 & [+0.0454, +0.0547] \\
7 & 3 & 56 & 0.8080 & 1 & basis 2: 0.7235 & +0.0845 & [+0.0767, +0.0925] \\
4 & 4 & 58 & 0.8485 & 2 & basis 5: 0.8535 & -0.0051 & [-0.0231, +0.0132] \\
5 & 4 & 136 & 0.8484 & 1 & basis 10: 0.8141 & +0.0343 & [+0.0272, +0.0414] \\
6 & 4 & 218 & 0.8182 & 1 & basis 2: 0.7643 & +0.0539 & [+0.0491, +0.0589] \\
7 & 4 & 56 & 0.8146 & 1 & basis 66: 0.7303 & +0.0843 & [+0.0759, +0.0929] \\
4 & 5 & 58 & 0.8552 & 2 & basis 5: 0.8653 & -0.0101 & [-0.0276, +0.0079] \\
5 & 5 & 136 & 0.8558 & 1 & basis 10: 0.8275 & +0.0283 & [+0.0212, +0.0356] \\
6 & 5 & 218 & 0.8237 & 1 & basis 2: 0.7722 & +0.0515 & [+0.0466, +0.0562] \\
7 & 5 & 56 & 0.8217 & 1 & basis 2: 0.7356 & +0.0861 & [+0.0780, +0.0942] \\
\bottomrule
\end{tabular}

\end{table*}

\begin{table*}[t]
 \centering
 \caption{Legacy-arm accounting behind the physical union column in
 Table~\ref{tab:resource-counts}.  The executed union is deduplicated, so its
 size is not the sum of arm sizes.  $R_E$ is retained here for provenance only.}
 \label{tab:resource-legacy}
 \small
\begin{tabular}{rrrrr}
\toprule
$r$ & legacy $R_E$ & displayed-arm union & physical executed union & legacy-only increment \\
\midrule
6 & 100 & 112 & 112 & 0 \\
7 & 116 & 180 & 184 & 4 \\
8 & 124 & 298 & 302 & 4 \\
9 & 140 & 410 & 414 & 4 \\
10 & 156 & 524 & 528 & 4 \\
11 & 172 & 637 & 641 & 4 \\
12 & 188 & 738 & 742 & 4 \\
\bottomrule
\end{tabular}
\par\smallskip
\footnotesize The displayed-arm union deduplicates the structured MUB selector $\Xi_{123}$, the $r^2$-budget Rand-SEMI-QAOA arm, and the capped matched-cardinality Rand-SEMI-QAOA arm. The physical union additionally includes legacy $R_E$ states; overlaps are counted once. A baseline run is not included in either union.

\end{table*}

\subsection{Graph-first aggregation and uncertainty}

State rows are never treated as independent replicates.  For a family mean,
we average its $d$ states within each graph.  For a candidate arm, we take its
maximum within each graph.  Only after this reduction do we compute a mean,
standard error, or bootstrap interval over graphs.  Every paired arm
difference is joined on $(n,p,r,\texttt{graph\_id})$ before resampling.
Percentile intervals use 10,000 graph resamples \cite{EfronTibshirani1993} and
deterministic seeds derived from the table key.  The regeneration script emits the seeds implicitly
through a SHA-256 key and records all source hashes.

The family runner-up in Table~\ref{tab:family-ranks} is chosen from the same
data used to report its margin.  Thus the intervals are descriptive paired
intervals for the selected runner-up, not selection-adjusted simultaneous
confidence intervals over all families.  Small cells are interpreted
accordingly.

\subsection{Candidate-set completeness and overlaps}

The four scale arms are evaluated through their distinct union: a state shared
by several arms is optimized once and credited to each.  Per-graph arm maxima
are emitted only after every expected union state for that graph and depth is
present.  This avoids downward bias from an unfinished task.  Selector sizes
are regenerated from the candidate code for every $r$ and cross-checked
against the stored union/unique-arm accounting.

The primary scale table contains exactly five arm records per complete scan:
the structured cross-family MUB selector $\Xi_{123}$ (internal key
\texttt{xi123}), a legacy ellipse $R_E$, the $r^2$-budget
\RandSEMIQAOA{} arm (\texttt{rand\_r2}), the matched-budget
\RandSEMIQAOA{} arm (\texttt{rand\_budget}), and the baseline.  The two random
product pools use separately seeded frozen subsets and are not generally
nested budget prefixes.  All three contributing campaigns use selector seed
12.  The implementation derives deterministic arm- and $r$-specific seeds
from a SHA-256 namespace, so membership depends only on the schedule, $r$, and
that seed and is reused across graph sizes, graph identifiers, and depths.  At
$r=6,7$ the capped matched-cardinality schedule saturates the complete product family.
The paper omits $R_E$ from main
figures because it contains a historical basis-3 branch with unresolved
$r=8$ provenance and because it is not needed to test the product-family
claim.  Its rows remain in the source data and completeness gate, and its
states remain included in the physical union counts reported in
Table~\ref{tab:resource-counts}.

\begin{table}[t]
 \centering
 \caption{Decoded best-of-set ratio on the complete $n=18,20,22$ cells,
 pooled over 2,400 graphs per depth.  This numerical companion to
 Fig.~\ref{fig:shallow-performance}b is placed here to avoid duplicating the main
 figure with a full table.}
 \label{tab:depth-summary}
 \small
\begin{tabular}{rrrrr}
\toprule
$p$ & baseline & \shortstack{Rand-SEMI\\$K=r^2$} & \shortstack{structured MUB\\selector $\Xi_{123}$} & \shortstack{Rand-SEMI\\matched $K$} \\
\midrule
1 & 0.7108 & 0.9319 & 0.9439 & 0.9421 \\
2 & 0.7516 & 0.9591 & 0.9760 & 0.9772 \\
3 & 0.7737 & 0.9738 & 0.9880 & 0.9893 \\
4 & 0.7810 & 0.9795 & 0.9916 & 0.9927 \\
5 & 0.7880 & 0.9829 & 0.9934 & 0.9942 \\
\bottomrule
\end{tabular}

\end{table}

\subsection{Runtime censoring}

The scale snapshot was produced while tasks were still completing.  Shorter
registers finish first.  This creates a known composition bias at large $n$:
the intended $n=30$ cohort has register histogram
$\{r10:16,r11:169,r12:15\}$, whereas the analysis freeze contains
$\{16,163,0\}$; the intended $n=34$ histogram is
$\{r12:123,r13:77\}$, whereas the analysis freeze contains $\{65,0\}$.  No model-based
correction is attempted.  Figure~\ref{fig:shallow-performance}a shows one continuous
trajectory, but changes to an open-marker, dashed-line encoding in the
censored tail so that visual continuity is not mistaken for an unbiased
cohort estimate.

\subsection{Energy/decoder diagnostics}

Within a graph--depth scan, energy--ratio alignment is computed over all stored
candidate states.  The regenerated analysis applies the same exact,
$r$-specific physical-union gate as the primary scale reduction.  It excludes
17 incomplete scans from the 12,449-row source snapshot and retains 12,432
complete scans.  Their 49,728 primary-arm maxima reproduce the main reduction
bit-for-bit.  The scan-weighted Spearman mean and median are $0.528499$ and
$0.531283$.  The energy argmax is also a decoded-ratio maximum, including
ties, in $0.222732$ of scans; the mean decoded loss over all retained scans is
$0.069907$.
Equal weighting over the 17 retained $(n,p)$ cells gives a Spearman mean of
$0.535951$, win rate $0.231103$, and loss $0.069387$.

The physical union includes the legacy $R_E$ contribution, so this diagnostic
is descriptive of the executed campaign rather than a comparison restricted
to the displayed arms.  Its $r=8$ legacy component is not used as corrected
nonproduct-family evidence.  The retained compact inputs are exact per-scan
sufficient statistics and per-arm maxima; raw candidate-state rows for this
scale analysis are not retained in the compact archive.

The second decoder diagnostic uses the analytic expectation
\begin{equation}
 C_{\mathrm{magic}}=\frac49W+\frac19\langle H\rangle,
\end{equation}
where $W$ is the constant problem term under that convention.  It tests score
ordering under a magic-basis expectation; it is not a finite-shot deployment
estimate.  We therefore say ``robust under the two tested score maps,'' not
``decoder independent.''

\begin{table*}[t]
 \centering
 \caption{Scope of the numerical evidence.  ``Rows retained'' distinguishes retained raw state rows from summary-only
metadata.}
 \label{tab:experiment-matrix}
 \small
 \begin{tabularx}{\textwidth}{lYYYY}
 \toprule
 Study & Graphs and depth & Candidate statistic & Retained backing & Main use\\
 \midrule
 MaxCut exhaustive & $n=10$--18, $p=1$--5, 100/cell & all $4^r$ states & recovered
 authenticated 20,310,980-row merge & 20 validated graph-first family aggregations
 at $r=4$--7, $p=1$--5; historical $r=8$ nonproduct rows excluded\\
 No-QAOA & 72 graphs, nine $(n,r)$ cells, $p=0,1$ & all product labels for
 $r\le8$, 64 otherwise & 10,384-row per-state CSV & within-product lift and exact
 best-of-$K$\\
 Scale & $n=18$--34, $p=1$--5 & best of fixed candidate arms & 67,685
 best-per-arm rows, 13,537 complete scans & complete depth cells through
 $n=22$; descriptive $p=1$ scale snapshot\\
 Energy alignment & 12,432 complete graph--depth scans & within-scan state
 ordering after the primary union gate & regenerated CSV and JSON audit &
 scope separation between energy and decoded ratio\\
 \bottomrule
 \end{tabularx}
\end{table*}

\section{Proofs for the deterministic mechanism}
\label{app:det-proofs}

\subsection{Binary Pauli convention}

For $x,z\in V=\F_2^r$, let $P(x,z)$ denote a Hermitian representative of the
Pauli class with $X$ mask $x$ and $Z$ mask $z$.  Phases are chosen so that the
spread subgroup
$L_j=\{P(x,M_jx):x\in V\}$ consists of commuting Hermitian operators.  Its
eigenstate labels are fixed by
\begin{equation}
 P(x,M_jx)\ket{\psi_{j,v}}=(-1)^{v\cdot x}\ket{\psi_{j,v}}.
 \label{eq:spread-eigenvalue}
\end{equation}
Distinct Paulis are orthogonal under the normalized trace:
$\tau_r(PP')=0$ unless $P=P'$ up to the chosen Hermitian phase.

\begin{proof}[Proof of Lemma~\ref{lem:stabilizer-filter}]
For a Pauli $P(x,z)$, its expectation in a stabilizer state vanishes unless it
belongs to the state's maximal stabilizer subgroup.  Membership in $L_j$ is
equivalent to $z=M_jx$.  In that case Eq.~\eqref{eq:spread-eigenvalue} gives
the expectation $(-1)^{v\cdot x}$.  Applying this rule term by term to the
Pauli expansion of $O$ gives Eq.~\eqref{eq:stabilizer-filter}.  Averaging over
$v$ uses character orthogonality,
\begin{equation}
 2^{-r}\sum_{v\in V}(-1)^{v\cdot x}=\mathbf1_{x=0}.
\end{equation}
Only the identity coefficient remains, and that coefficient is $\tau_r(O)$.
\end{proof}

\begin{proof}[Proof of Lemma~\ref{lem:galois-randomization}]
If $M_j^{\mathsf T}\kappa=M_{j'}^{\mathsf T}\kappa$, symmetry of the spread
matrices gives $(M_j-M_{j'})\kappa=0$.  For $j\ne j'$, the spread property
makes $M_j-M_{j'}$ nonsingular, contradicting $\kappa\ne0$.  The map is
injective on a finite set of size $2^r$ and therefore bijective.
\end{proof}

\subsection{Matched-mixer conjugation}

We first record the sign in the physical convention.  If Hermitian Paulis
$g,T$ anticommute, then
\begin{align}
 e^{-\ii\beta g}T e^{\ii\beta g}
 &=(\cos\beta I-\ii\sin\beta g)T
   (\cos\beta I+\ii\sin\beta g)\nonumber\\
 &=\cos(2\beta)T-\ii\sin(2\beta)gT.
 \label{eq:single-generator-conjugation}
\end{align}
For commuting $g,T$, the conjugation is $T$.  Since the $g_{j,v,q}$ commute
with one another, selecting the sine branch on a subset
$\kappa\subseteq S_T$ gives
\begin{equation}
 e^{\ii\beta B_{j,v}}T e^{-\ii\beta B_{j,v}}
 =\sum_{\kappa\subseteq S_T}
 c_\beta^{w_j(T)-|\kappa|}(-\ii s_\beta)^{|\kappa|}G_\kappa T,
 \label{eq:full-mixer-conjugation}
\end{equation}
because $B_{j,v}=-\sum_qg_{j,v,q}$.

\begin{proof}[Proof of Theorem~\ref{thm:exact-branches}]
Expand the observable $H=\sum_Th_TT$ and insert
Eq.~\eqref{eq:full-mixer-conjugation}.  Since $G_\kappa^2=I$,
\begin{align}
 e^{\ii\gamma H}G_\kappa T e^{-\ii\gamma H}
 &=G_\kappa e^{\ii\gamma G_\kappa HG_\kappa}
 T e^{-\ii\gamma H}\nonumber\\
 &=G_\kappa e^{\ii\gamma H^{(\kappa)}}T e^{-\ii\gamma H}.
\end{align}
The signed product $G_\kappa$ is a stabilizer of
$\ket{\psi_{j,v}}$ with eigenvalue $+1$.  It may therefore be removed from
the expectation.  Averaging the remaining operator over $v$ and applying
Eq.~\eqref{eq:family-trace} produces the normalized trace
$L_{T,\kappa}$.  For $\kappa=0$, cyclicity gives
\begin{equation}
 L_{T,0}=\tau_r(e^{\ii\gamma H}Te^{-\ii\gamma H})
 =\tau_r(T)=0
\end{equation}
because every root term is nonidentity.  This proves
Eq.~\eqref{eq:exact-branch-formula}.  At $\beta=k\pi/2$ every nonempty branch
contains $s_\beta=0$, proving the resonance statement.
\end{proof}

\subsection{Rank defect and locality of the implemented packing}

\begin{proof}[Proof of Proposition~\ref{prop:actual-packing}]
Let $G$ be a connected cubic graph on $n$ vertices.  The encoder partitions
each of its $\chi\le4$ color classes into chunks of sizes one, two, or three.
Let $r$ be the total number of chunks and $n_1$ the number of singleton
chunks.  Define the linear map
\begin{equation}
 L:\F_2^n\longrightarrow\F_2^{2r}
\end{equation}
that sends the coordinate vector of a graph vertex to the binary symplectic
label of its assigned one-qubit Pauli.

The local contribution to $\rank L$ is one for a singleton and two for a
chunk of size two or three: any two distinct vectors among the one-qubit
labels $X,Y,Z$ are linearly independent, and $X+Y+Z=0$.  Supports from
different relaxed qubits are disjoint.  Hence
\begin{equation}
 \rank L=2r-n_1.
 \label{eq:L-rank}
\end{equation}
At most one remainder chunk in each color class is a singleton, so
$n_1\le\chi\le4$.

The binary incidence vectors $e_i+e_j$ of a connected graph span the
even-parity subspace
\begin{equation}
 W=\{a\in\F_2^n:\textstyle\sum_i a_i=0\},
 \qquad\dim W=n-1.
\end{equation}
The symplectic label of edge term $P_iP_j$ is $L(e_i+e_j)$, so the row span of
$[Z\ X]$ is $L(W)$.  Restricting a linear map to a codimension-one subspace
reduces its rank by at most one; therefore
\begin{equation}
 \rank\Phi=\dim L(W)\ge\rank L-1=2r-n_1-1.
\end{equation}
It follows that
\begin{equation}
 d_r=2r-\rank\Phi\le n_1+1\le5.
 \label{eq:rank-defect-five}
\end{equation}

For a color class of size $m$, the number of chunks is $\lceil m/3\rceil$.
Summing over colors gives
\begin{equation}
 \frac n3\le r\le\frac n3+\frac{2\chi}{3}\le\frac n3+\frac83.
 \label{eq:r-versus-n}
\end{equation}
The vertex-to-Pauli map is injective: within a chunk the axes are distinct,
and different chunks use different relaxed qubits.  Distinct graph edges
therefore produce distinct two-qubit Pauli strings.  Since a cubic graph has
$M_r=3n/2$ edges, $M_r=\Theta(r)$ and $M_r/r\to9/2$.

Each chunk contains at most three degree-three vertices, so at most nine
Hamiltonian edges touch a relaxed qubit.  An edge term acts on two relaxed
qubits; excluding itself, it can overlap at most eight terms through each.
Thus the term-overlap degree is at most 16.

Finally, an edge term has zero $X$ label exactly when both endpoint axes are
$Z$.  There is at most one $Z$-assigned vertex in a chunk, hence at most $r$
such vertices.  The number of $ZZ$ edges is at most $3r/2$, so
\begin{equation}
 \frac{M_r^*}{M_r}\ge1-\frac rn\longrightarrow\frac23.
 \label{eq:active-density}
\end{equation}
For the product family $M_0=0$, a MaxCut edge has zero syndrome exactly when
both endpoint axes are $X$.  The same argument gives the final density claim.
\end{proof}

\subsection{Positive-distance counting}

Let $B_M(\delta M)=\{q\in\F_2^M:\wt(q)\le\delta M\}$.

\begin{proof}[Proof of Theorem~\ref{thm:code-distance}]
Fix $\kappa\ne0$.  By Lemma~\ref{lem:galois-randomization},
$y_j=M_j^{\mathsf T}\kappa$ visits every $y\in V$ exactly once as $j$ varies.
Using symmetry of $M_j$,
\begin{equation}
 \mathcal A_j\kappa=Z\kappa+XM_j\kappa=\Phi(\kappa,y_j).
\end{equation}
Thus every bad pair $(j,\kappa)$ maps to a preimage under $\Phi$ of a vector
in $B_{M_r}(\delta M_r)$.  Every nonempty fiber of $\Phi$ has size
$2^{d_r}\le2^{d_0}$, so
\begin{align}
 &\#\{(j,\kappa):\kappa\ne0,
   \wt(\mathcal A_j\kappa)\le\delta M_r\}\nonumber\\
 &\qquad\le2^{d_0}\sum_{q=0}^{\lfloor\delta M_r\rfloor}
 \binom{M_r}{q}
 \le2^{d_0+M_rH_2(\delta)}.
 \label{eq:bad-pair-count}
\end{align}
The entropy condition makes this at most
$2^{(1-\eta)r+o(r)}$.  The number of bad families is no larger than the number
of bad pairs.  Division by $2^r$ gives an exceptional relative fraction
$2^{-\eta r+o(r)}$.

For the implemented cubic packing, Proposition~\ref{prop:actual-packing}
allows $d_0=5$ and $M_r/r\to9/2$.  Any fixed sufficiently small $\delta>0$
with $(9/2)H_2(\delta)<1$ satisfies the condition for large $r$.
\end{proof}

\subsection{Connected two-contour series}

For fixed $j,\kappa$, write
$\sigma_A=(-1)^{(\mathcal A_j\kappa)_A}$ and define
\begin{align}
 Z_\kappa(\gamma)
 &=\tau_r(e^{\ii\gamma H^{(\kappa)}}e^{-\ii\gamma H}),\nonumber\\
 Z_{T,\kappa}(\gamma,z)
 &=\tau_r(e^{\ii\gamma H^{(\kappa)}}e^{zT}e^{-\ii\gamma H}).
\end{align}
On a disk where the connected logarithm converges,
\begin{equation}
 L_{T,\kappa}=Z_\kappa R_{T,\kappa},
 \qquad
 R_{T,\kappa}=\left.\partial_z\log Z_{T,\kappa}\right|_{z=0}.
 \label{eq:det-root-factorization}
\end{equation}

We spell out the connected-series estimate used here.  A two-contour word is
an ordered list of Hamiltonian-term occurrences, each assigned to the forward
or backward exponential.  Its occurrence graph connects two occurrences when
their physical Pauli supports overlap.  Normalized trace and exponential
shuffle coefficients factor across disconnected components.  The exponential
formula therefore expresses $\log Z_\kappa$ as the sum of connected occurrence
graphs; differentiating a source logarithm selects connected graphs containing
the root $T$.

At total order $m$, choose one root occurrence in at most $M_r$ ways.  A
connected occurrence graph has a spanning tree; Cayley's bound gives at most
$m^{m-2}$ labeled trees, and each child term has at most $D$ choices after its
parent.  The two contours contribute
\begin{equation}
 \sum_{a+b=m}\frac{|\gamma|^m}{a!b!}
 =\frac{(2|\gamma|)^m}{m!}.
\end{equation}
Since every normalized Pauli trace has modulus at most one, the connected
order-$m$ contribution is bounded by
\begin{equation}
 M_r\frac{(2h_*|\gamma|)^m}{m!}m^{m-2}D^{m-1}.
 \label{eq:det-tree-bound}
\end{equation}
Using $m^{m-2}/m!\le e^m/m^2$, the series converges absolutely and uniformly
whenever $2eh_*D|\gamma|<1$.  For the rooted series, the root $T$ is fixed and
the factor $M_r$ is absent.

Let $K=H^{(\kappa)}-H=-2\sum_{A:\sigma_A=-1}h_AA$.  Direct second-order
expansion, trace cyclicity, and Pauli orthogonality give
\begin{equation}
 Z_\kappa(\gamma)=1-\frac{\gamma^2}{2}\tau_r(K^2)
 +O(M_r|\gamma|^3)
 =1-2Q_j(\kappa)\gamma^2+O(M_r|\gamma|^3).
\end{equation}
The uniform connected bound upgrades this to
\begin{equation}
 \log Z_\kappa=-2Q_j(\kappa)\gamma^2+\mathcal R_\kappa,
 \qquad |\mathcal R_\kappa|\le C_0M_r|\gamma|^3.
 \label{eq:det-log-expansion}
\end{equation}
At $\gamma=0$, $R_{T,\kappa}=\tau_r(T)=0$.  Every nonzero rooted connected
word contains at least one Hamiltonian occurrence, so the rooted version of
Eq.~\eqref{eq:det-tree-bound} yields
\begin{equation}
 |R_{T,\kappa}(\gamma)|\le C_1|\gamma|.
 \label{eq:det-root-bound}
\end{equation}

\begin{proof}[Proof of Theorem~\ref{thm:det-dephasing}]
If $Q_j(\kappa)\ge q_*r$ and $M_r\le C_Mr$, choose a uniform
$\gamma_*>0$ such that
$C_0C_M|\gamma|^3\le q_*\gamma^2$ for
$|\gamma|\le\gamma_*$.  Equation~\eqref{eq:det-log-expansion} gives
$|Z_\kappa|\le e^{-q_*r\gamma^2}$.  Combining this with
Eqs.~\eqref{eq:det-root-factorization} and \eqref{eq:det-root-bound} proves
Eq.~\eqref{eq:det-dephasing}.

For equal-magnitude coefficients $|h_A|=h$, positive distance gives
$Q_j(\kappa)=h^2\wt(\mathcal A_j\kappa)\ge h^2\delta M_r=\Omega(r)$
simultaneously for every nonzero branch.
\end{proof}

\subsection{Product-family rooted locality}

The product family has an additional structural property that does not require
a connected logarithm.  For a root Pauli $T$, its mixer-rotated observable
$O_T(\beta)=e^{\ii\beta B_{0,v}}Te^{-\ii\beta B_{0,v}}$ remains supported on
$\supp T$ because $B_{0,v}$ is a sum of one-local $X$ operators.  Expand
\begin{equation}
 e^{\ii\gamma H}O_T(\beta)e^{-\ii\gamma H}
 =\sum_{m\ge0}\frac{(\ii\gamma)^m}{m!}
 \operatorname{ad}_H^m(O_T).
 \label{eq:nested-commutator}
\end{equation}
In a nested commutator, a newly selected Hamiltonian term contributes only if
it overlaps the support union already generated by $T$ and the previous
terms; otherwise it commutes and that word vanishes.  Every nonzero word is
therefore a physical occurrence cluster connected to $T$.  The raw number of
ordered words need not be exponential in $m$.  Rather, the rooted spanning-tree
bound used in Eq.~\eqref{eq:det-tree-bound}, together with the $1/m!$ Taylor
weight, bounds the factorial-weighted sum of order-$m$ rooted occurrence words
by $O(C^m)$ uniformly in $r$.  This exact support preservation is the
deterministic reason product-family Taylor coefficients do not acquire a
volume factor after the $M_r^{-1}$ root average.

The argument says nothing about the sign of connected cubic or higher-order
coefficients.  In particular, Pauli closure of a proposed word is necessary
but not sufficient for a nonzero signed contribution after contour and branch
summation.  No deterministic cubic family ordering is claimed.

\section{Proofs for the anisotropic Gaussian mechanism}
\label{app:gaussian-proofs}

\subsection{Response and isotropic no-go}

Fix the QRAO Pauli frame $\{A_e:e\in E_r\}$ and let
\begin{equation}
 H_G=h\sum_{e\in E_r}g_eA_e,
 \qquad g_e\stackrel{\mathrm{iid}}{\sim}N(0,1).
\end{equation}
For common angles $(\gamma,\beta)$, define the annealed label-averaged
one-layer energy by
\begin{align}
 \mathcal E^{G}_{j,r}(\gamma,\beta)
 &:=\frac{1}{M_r2^r}\sum_{v\in V}\E_g\Bigl[
 \bra{\psi_{j,v}}e^{\ii\gamma H_G}e^{\ii\beta B_{j,v}}\nonumber\\[-2pt]
 &\hspace{4.2em}{}\times H_G\,e^{-\ii\beta B_{j,v}}
 e^{-\ii\gamma H_G}\ket{\psi_{j,v}}\Bigr],
 \label{eq:gaussian-annealed-energy}
\end{align}
and its response by
\begin{equation}
 \mathcal F_{j,r}(\gamma,\beta)
 :=\mathcal E^{G}_{j,r}(\gamma,\beta)
   -\mathcal E^{G}_{j,r}(0,\beta).
 \label{eq:gaussian-response-definition}
\end{equation}
This is a signed relaxed-energy change normalized by the number of edge terms.
It is not a quenched statement, a statewise angle optimization, or a decoded
score.

Suppose instead that $G$ were a centered isotropic Gaussian on the real
Hilbert space of traceless Hermitian operators.  If a Clifford $W_j$ maps the reference family and its matched mixers to
family $j$, let $\mathcal R_{j,v}(G;\theta)$ denote the corresponding
label-response functional at angle tuple $\theta$.  Circuit conjugation gives
\begin{equation}
 \mathcal R_{j,v}(G;\theta)
 =\mathcal R_{0,v}(W_j^\dagger GW_j;\theta).
\end{equation}
Isotropy implies $W_j^\dagger GW_j\stackrel{d}=G$, so all finite-dimensional
distributions of the complete response process are the same for every $j$.
No covariant statistic of that process can select the product family.  The
sparse covariance
\begin{equation}
 \Sigma_{\mathrm{QRAO}}(X)=h^2\sum_e A_e\Tr(A_eX)
\end{equation}
is therefore essential: it has rank at most $M_r=O(r)$ in a traceless
operator space of dimension $4^r-1$.

For completeness, the static process
$X_{j,v}=\langle\psi_{j,v}|H_G|\psi_{j,v}\rangle$ has exact covariance
\begin{equation}
 \E[X_{j,v}X_{j,v'}]
 =h^2\sum_{e:a_j(A_e)=0}(-1)^{(v+v')\cdot x_e}.
 \label{eq:static-covariance}
\end{equation}
If $x_e\ne0$, the spread property lets that edge satisfy
$a_j(A_e)=0$ for at most one Galois family.  Thus at most $M_r$ families have
any active static edge, another expression of anisotropy.  We do not use this
static fact to infer a decoded advantage.

\subsection{Exact-support two-contour polymers}

For a mixer branch $\kappa$, put
\begin{equation}
 \sigma_e(\kappa)=(-1)^{(\mathcal A_j\kappa)_e},
 \qquad H_\kappa=h\sum_e\sigma_e(\kappa)g_eA_e.
\end{equation}
Define
\begin{align}
 Z_\kappa(\gamma)
 &=\E_g\tau_r(e^{\ii\gamma H_\kappa}e^{-\ii\gamma H_G}),\nonumber\\
 Z_{T,\kappa}(\gamma,z)
 &=\E_g\tau_r(e^{\ii\gamma H_\kappa}e^{zg_tT}
 e^{-\ii\gamma H_G}),\nonumber\\
 X_{T,\kappa}(\gamma)
 &=\left.\partial_zZ_{T,\kappa}(\gamma,z)\right|_{z=0},
 \label{eq:gaussian-vacuum-source}
\end{align}
where $T=A_t$.  On the convergence disk,
\begin{equation}
 X_{T,\kappa}=Z_\kappa R_{T,\kappa},
 \qquad R_{T,\kappa}=\left.\partial_z\log Z_{T,\kappa}\right|_{z=0}.
 \label{eq:gaussian-root-factorization}
\end{equation}

Expand both contour exponentials.  For a nonempty label set
$C\subseteq E_r$, let $W_\kappa(C;\gamma)$ be the sum of all annealed words
whose set of distinct Gaussian labels is exactly $C$, including all
assignments of occurrences to contours and all word orders.  A label must
occur an even number of times to survive Gaussian averaging.

If the induced overlap graph on $C$ has connected components
$C_1,\ldots,C_m$, the Gaussian moments factor because the label sets are
disjoint.  The Pauli words commute and the normalized trace factors because
the components have disjoint physical qubit support.  Finally, the number of
shuffles of component occurrences cancels the multinomial decomposition of
the exponential factorials.  Hence
\begin{equation}
 W_\kappa(C;\gamma)=\prod_{i=1}^mW_\kappa(C_i;\gamma).
 \label{eq:exact-support-factorization}
\end{equation}
Summing over finite label sets and decomposing them into components gives the
exact hard-core polymer gas
\begin{equation}
 Z_\kappa(\gamma)=
 \sum_{\mathcal P\ \mathrm{compatible}}
 \prod_{C\in\mathcal P}W_\kappa(C;\gamma),
 \label{eq:hard-core-gas}
\end{equation}
where polymers are nonempty connected edge sets and compatibility means graph
distance at least two.  Equation~\eqref{eq:hard-core-gas} retains every
noncommuting ordering within a connected set and every Gaussian Wick pairing.

\subsection{Activity and Koteck\'y--Preiss bounds}

Set $u_{\rm abs}(\gamma)=e^{2h^2|\gamma|^2}-1$, abbreviated below by
$u_{\rm abs}$.  If label $e$ occurs $2m$ times, the sum of absolute scalar
weights is bounded by
\begin{align}
 \sum_{m\ge1}\frac{(2|h\gamma|)^{2m}}{(2m)!}\E g_e^{2m}
 &=\sum_{m\ge1}\frac{(2h^2|\gamma|^2)^m}{m!}
 =u_{\rm abs}(\gamma),
\end{align}
where $\E g^{2m}=(2m-1)!!$.  Every Pauli trace has modulus at most one, so
\begin{equation}
 |W_\kappa(C;\gamma)|\le u_{\rm abs}(\gamma)^{|C|}.
 \label{eq:activity-bound-proof}
\end{equation}

Let $D$ be one plus the maximum overlap degree.  The number of connected edge
sets of size $s$ containing a fixed edge is at most $(eD)^{s-1}$.  A polymer
incompatible with $C_0$ contains a marked edge at graph distance at most one
from $C_0$, with at most $D|C_0|$ choices.  Therefore
\begin{align}
 \sum_{C\not\sim C_0}|W_\kappa(C)|e^{|C|}
 &\le D|C_0|\sum_{s\ge1}(eD)^{s-1}(u_{\rm abs}e)^s\nonumber\\
 &=D|C_0|\frac{eu_{\rm abs}}{1-e^2Du_{\rm abs}}.
 \label{eq:kp-sum}
\end{align}
For
\begin{equation}
 u_{\rm abs}\le\rho_D:=\frac1{16e^2D^2},
\end{equation}
the final factor is at most $|C_0|$.  The Koteck\'y--Preiss criterion
\cite{KoteckyPreiss1986} then gives absolute, uniform convergence of the gas,
its connected logarithm, and source derivatives.

\subsection{Singleton resummation and fourth-order remainder}

For a singleton label $e$, both contour factors are functions of the same
Pauli and commute.  The Gaussian characteristic function yields
\begin{align}
 1+W_\kappa(\{e\};\gamma)
 &=\E\tau_r\!\left[e^{\ii\gamma h(\sigma_e-1)g_eA_e}\right]\nonumber\\
 &=\exp\!\left[-\tfrac12h^2\gamma^2(1-\sigma_e)^2\right]
 =\begin{cases}
 1,&\sigma_e=+1,\\
 e^{-2h^2\gamma^2},&\sigma_e=-1.
 \end{cases}
 \label{eq:singleton-exact}
\end{align}
Thus the exact singleton part of $\log Z_\kappa$ is
$-2h^2N_-(\kappa)\gamma^2$.

We bound everything else.  For fixed $\gamma\ne0$, write
$W(C;\gamma)=u_{\rm abs}(\gamma)^{|C|}\widehat W(C;\gamma)$ with
$|\widehat W(C;\gamma)|\le1$ (the case $\gamma=0$ follows by continuity),
and let $Z_\kappa^{(s)}$ denote the same gas after replacing the scalar
$u_{\rm abs}(\gamma)$ by a complex variable $s$.  The KP disk contains
$|s|\le\rho_D$.  Define
\begin{equation}
 Q(s)=\log Z_\kappa^{(s)}
 -\sum_e\log(1+s\widehat W(\{e\};\gamma)).
\end{equation}
Its constant and linear Taylor coefficients vanish.  Uniform KP bounds give
$\sup_{|s|\le\rho_D}|Q(s)|\le A_DM_r$ for a constant depending only on $D$.
Cauchy's estimate then gives, for $|s|\le\rho_D/2$,
\begin{equation}
 |Q(s)|\le K_DM_r|s|^2.
\end{equation}
Returning to $s=u_{\rm abs}(\gamma)$ proves
\begin{equation}
 \log Z_\kappa=-2h^2N_-(\kappa)\gamma^2+\mathcal R_\kappa,
 \qquad |\mathcal R_\kappa|\le K_DM_ru_{\rm abs}(\gamma)^2.
 \label{eq:fourth-order-remainder}
\end{equation}
If $2h^2|\gamma|^2\le1$, then
$u_{\rm abs}\le4h^2|\gamma|^2$, so the remainder is bounded
by $C_4M_r|\gamma|^4$.  Distinct-label Gaussian contractions cannot occur at
quadratic order; every connected multilabel word contains at least four
occurrences.  This is the structural origin of the fourth-order start.

For a rooted source polymer, label $t$ occurs an odd number $2m+1$ of times,
while every other label occurs a positive even number.  Its root scalar sum is
\begin{equation}
 \sum_{m\ge0}\frac{(2|h\gamma|)^{2m+1}(2m+1)!!}{(2m+1)!}
 =2|h\gamma|e^{2h^2|\gamma|^2}=:v_{\rm abs}(\gamma).
\end{equation}
Hence a rooted activity supported on $C\ni t$ is bounded by
$v_{\rm abs}(\gamma)u_{\rm abs}(\gamma)^{|C|-1}$.  The rooted version of
Eq.~\eqref{eq:kp-sum} gives
\begin{equation}
 |R_{T,\kappa}(\gamma)|\le C_R|h\gamma|.
\end{equation}
Combining this with Eqs.~\eqref{eq:gaussian-root-factorization} and
\eqref{eq:fourth-order-remainder} proves
Eq.~\eqref{eq:gaussian-root-bound} and Theorem~\ref{thm:polymer-control}.

\subsection{Simultaneous activity of low-weight branches}

Call an edge active when $x_e\ne0$, and let $M_r^*$ be the number of active
edges.  Assume eventually $M_r^*/M_r\ge p_*>0$.  Fix $\kappa\ne0$ and set
$Y_j=M_j^{\mathsf T}\kappa$.  Lemma~\ref{lem:galois-randomization} makes
$Y_j$ exactly uniform on $V$ for uniform $j$.  For an active edge,
\begin{equation}
 (\mathcal A_j\kappa)_e=\kappa\cdot z_e+Y_j\cdot x_e
\end{equation}
is balanced, so the expected number $N_-^*$ of flipped active edges is
$M_r^*/2$.

View uniform $Y_j$ as $r$ independent fair bits.  Flipping coordinate $q$ can
change only indicators for terms with $(x_e)_q=1$.  All such terms act on
relaxed qubit $q$ and form a clique in the physical overlap graph, so there
are at most $D$ of them.  McDiarmid's inequality gives
\begin{equation}
 \Prob_j\!\left(\left|N_-^*-\frac{M_r^*}{2}\right|\ge t\right)
 \le2\exp\!\left(-\frac{2t^2}{rD^2}\right).
 \label{eq:mcdiarmid}
\end{equation}
Since $M_r=\Theta(r)$, choosing a fixed linear deviation shows
\begin{equation}
 \Prob_j(N_-^*<p_*M_r/4)\le e^{-c_Gr}.
\end{equation}
There are only
\begin{equation}
 \sum_{\ell=1}^{\lfloor\log r\rfloor}\binom r\ell
 =\exp(O((\log r)^2))
\end{equation}
nonzero branches of weight at most $\lfloor\log r\rfloor$.  A union bound
therefore makes $N_-(j,\kappa)\ge p_*M_r/4$ simultaneous for all such branches
outside a family set of relative size $e^{-\Omega(r)}$.

\begin{proof}[Proof of Theorem~\ref{thm:gaussian-dephasing}]
Let $\beta=b/r$ with $|b|\le b_+$.  For root syndrome size $w\le r$, the
total absolute mixer mass at branch size $\ell$ obeys
\begin{equation}
 \binom w\ell|\sin(2b/r)|^\ell
 \le\frac{(2b_+)^\ell}{\ell!}.
 \label{eq:mixer-factorial}
\end{equation}
For the simultaneous low-weight branches, the rooted bound becomes
\begin{equation}
 |X_{T,\kappa}|
 \le C_R|h\gamma|
 \exp\!\left[-\frac{p_*h^2}{2}M_r\gamma^2+C_4M_r\gamma^4\right].
\end{equation}
Summing Eq.~\eqref{eq:mixer-factorial} through
$\ell\le\lfloor\log r\rfloor$ costs at most $e^{2b_+}$.  For larger branches,
unitarity gives the trivial bound
$|X_{T,\kappa}|\le\E|g_t|$.  Their total contribution is bounded by
\begin{equation}
 \eta_r:=C_B\sum_{\ell>\lfloor\log r\rfloor}
 \frac{(2b_+)^\ell}{\ell!}=o(r^{-m})
 \label{eq:gaussian-tail}
\end{equation}
for every fixed $m$, where $C_B$ is independent of $r$.  Averaging over roots
then proves Eq.~\eqref{eq:gaussian-family-bound}.

Choose $\gamma_*$ so that the KP hypotheses hold and
$C_4\gamma_*^2\le p_*h^2/4$.  Then the first term is bounded by
$C|\gamma|e^{-aM_r\gamma^2}$.  Its supremum over $\gamma$ is
$(2eaM_r)^{-1/2}$ up to the constant, which vanishes; so does $\eta_r$.
This proves Eq.~\eqref{eq:gaussian-vanish}.  Proposition~\ref{prop:actual-packing}
supplies any eventual $p_*<2/3$ for the implemented encoder.
\end{proof}

\subsection{Corrected product-family slope}

The exact Gaussian branch expansion associated with
Eq.~\eqref{eq:gaussian-response-definition} is
\begin{equation}
 \mathcal F_{j,r}(\gamma,\beta)
 =\frac h{M_r}\sum_T
 \sum_{\varnothing\ne\kappa\subseteq S_T}
 c_\beta^{w_j(T)-|\kappa|}(-\ii s_\beta)^{|\kappa|}
 X_{T,\kappa}(\gamma).
 \label{eq:gaussian-branch-expansion}
\end{equation}
At $\gamma=0$, Gaussian integration or direct differentiation gives
\begin{equation}
 X'_{T,\kappa}(0)=\ii h(\sigma_T(\kappa)-1).
 \label{eq:root-slope}
\end{equation}
For $\kappa\subseteq S_T$,
$\sigma_T(\kappa)=(-1)^{|\kappa|}$, so only odd branches contribute.  Their
coefficient sum is
\begin{align}
 \sum_{|\kappa|\ \mathrm{odd}}
 c_\beta^{w-|\kappa|}(-\ii s_\beta)^{|\kappa|}
 &=\frac{(c_\beta-\ii s_\beta)^w-(c_\beta+\ii s_\beta)^w}{2}\nonumber\\
 &=-\ii\sin(2\beta w).
 \label{eq:odd-branch-sum}
\end{align}
Combining Eqs.~\eqref{eq:root-slope} and \eqref{eq:odd-branch-sum} yields
\begin{equation}
 \left.\partial_\gamma\mathcal F_{0,r}\right|_{\gamma=0}
 =-\frac{2h^2}{M_r}\sum_{e\in E_r}
 \sin(2\beta w_0(A_e)).
 \label{eq:correct-product-slope}
\end{equation}
The sum includes every edge.  Edges with $w_0=0$ contribute zero
automatically; $ZZ$ edges have $w_0=2$ and generally contribute.

It remains to justify the remainder uniformly in volume.  Fix a product-family
root $T=A_t$ and a branch $\kappa\subseteq S_T$, and let
\begin{equation}
 \mathcal D_\kappa=\{e:\sigma_e(\kappa)=-1\}.
\end{equation}
Every term in $\mathcal D_\kappa$ acts on a mixer qubit in $\kappa$ and hence
overlaps $T$.  Consequently $|\mathcal D_\kappa|\le D$.  Moreover
$|S_T|=w_0(T)\le w_*$, so the number of branches for a fixed root is bounded by
$2^{w_*}$ independently of $r$.

The exact-support gas has an additional cancellation in this setting.  If a
nonempty label set $C$ is disjoint from $\mathcal D_\kappa$, its signs all obey
$\sigma_e=1$.  Its exact-support coefficient is therefore the coefficient with
support exactly $C$ in
\begin{equation}
 \E_g\tau_r(e^{\ii\gamma H_C}e^{-\ii\gamma H_C})=1,
 \qquad H_C=h\sum_{e\in C}g_eA_e,
\end{equation}
and hence $W_\kappa(C;\gamma)=0$.  Thus every nonzero vacuum polymer is anchored
to the uniformly bounded set $\mathcal D_\kappa$.  Applying the rooted-cluster
version of Eq.~\eqref{eq:kp-sum}, with a marked label in that set, gives
$|\log Z_\kappa|\le C_D|\mathcal D_\kappa|$ on a smaller disk.  The source
logarithm is likewise uniform because every source polymer is connected to the
fixed root label $t$.  Hence that common complex disk carries uniform analytic
bounds on $Z_\kappa$, $R_{T,\kappa}$, and
$X_{T,\kappa}=Z_\kappa R_{T,\kappa}$.

The branch expansion contains only the uniformly bounded collection
$\kappa\subseteq S_T$, and its trigonometric coefficients have modulus at most
one.  After the $M_r^{-1}$ root average, $\mathcal F_{0,r}$ is therefore
uniformly bounded and analytic on that same disk.  Gaussian sign symmetry makes
it odd in $\gamma$.  Cauchy's estimate on a smaller disk, together with
Eq.~\eqref{eq:correct-product-slope}, now gives uniformly in $r$ and $\beta$
\begin{equation}
 \mathcal F_{0,r}(\gamma,\beta)
 =-2h^2s_{0,r}(\beta)\gamma+O(|\gamma|^3),
\end{equation}
which proves Eq.~\eqref{eq:product-linear-response}.

For the implemented cubic packing, $w_0(A_e)=0$ exactly on $XX$ edges.  There
is at most one $X$ vertex per chunk, so at most $3r/2$ edges are $XX$.  At
$\beta_0=\pi/8$, every remaining edge has $w_0=1$ or 2 and contributes at
least $1/\sqrt2$.  Therefore
\begin{equation}
 s_{0,r}(\pi/8)
 \ge\frac1{\sqrt2}\left(1-\frac rn\right)
 \longrightarrow\frac{\sqrt2}{3}>0.
 \label{eq:product-positive-density}
\end{equation}
Choose one fixed sufficiently small $\gamma_0<0$.  The linear term in
Eq.~\eqref{eq:product-linear-response} is then positive and dominates the
cubic remainder uniformly, giving
$\mathcal F_{0,r}(\gamma_0,\pi/8)\ge c_0>0$ for all sufficiently large $r$.
Theorem~\ref{thm:gaussian-dephasing} makes the typical entangled-family
supremum on $\beta=b/r$ smaller than $c_0/2$ for large $r$, completing the
proof of Theorem~\ref{thm:product-separation}.

No boundary-layer limit is asserted here.  Such a limit would require
finite-order joint concentration of multiple syndrome overlaps, which is not
proved by the one-branch bounded-difference argument above.

\end{document}